\algnewcommand\algorithmicforeach{\textbf{for each}}
\def\mx#1{\mbox{\boldmath$#1$}}
\let\restriction\upharpoonright \def\parc#1{\restriction\!{#1}}
\title{Twin-Width is Linear in the Poset Width}
\author{Jakub Balab\'an}{Masaryk University, Brno, Czech Republic}
		{jakbal@mail.muni.cz}{}{}
\author{Petr Hlin\v en\'y}{Masaryk University, Brno, Czech Republic}
		{hlineny@fi.muni.cz}{https://orcid.org/0000-0003-2125-1514}{}
\authorrunning{J.~Balab\'an and P.~Hlin\v en\'y}
\keywords{twin-width; digraph; poset; FO model checking; contraction sequence}
\def\ca#1{{\cal#1}}
\def\cO{\ca O}
\begin{document}
	
	\maketitle
	
\begin{abstract}
Twin-width is a new parameter informally measuring how diverse are the
neighbourhoods of the graph vertices, and it extends also to other binary
relational structures, e.g\. to digraphs and posets.
It was introduced just very recently, in 2020 by Bonnet, Kim, Thomass{\'{e}} and Watrigant.
One of the core results of these authors is that FO model checking on graph
classes of bounded twin-width is in FPT.
With that result, they also claimed that posets of bounded width have
bounded twin-width, thus capturing prior result on FO model checking of
posets of bounded width in FPT.
However, their translation from poset width to twin-width was indirect and
giving only a very loose double-exponential bound.
We prove that posets of width $d$ have twin-width at most $9d$ with a direct
and elegant argument, and show that this bound is asymptotically tight.
Specially, for posets of width $2$ we prove that in the worst case their
twin-width is also equal~$2$.
These two theoretical results are complemented with straightforward
algorithms to construct the respective contraction sequence for a given poset.
\end{abstract}

\section{Introduction}\label{introduction}

The new notion of twin-width (of graphs, digraphs, or matrices)
was introduced just very recently, in 2020, by Bonnet, Kim, Thomass{\'{e}}
and Watrigant~\cite{DBLP:conf/focs/Bonnet0TW20},
and yet has already found many very interesting applications.
These applications span from efficient parameterized algorithms and
algorithmic metatheorems, through finite model theory, to classical
combinato\-rial questions.
See also the series of follow-up papers
\cite{DBLP:conf/soda/BonnetGKTW21,DBLP:journals/corr/abs-2007-14161,DBLP:journals/corr/abs-2102-03117,DBLP:journals/corr/abs-2102-06880}.

We leave formal definitions for the next section.
Informally, in simple graphs,
twin-width measures how diverse are the neighbourhoods of the graph vertices.
E.g., cographs%
\footnote{Cographs are the graphs which can be built from singleton vertices
by repeated operations of a disjoint union and taking the complement.}
have the lowest possible value of twin-width,~$0$,
which means that the graph can be brought down to a single vertex by
successively identifying twin vertices (hence the name, {\em twin-width}).
Two vertices $x$ and $y$ are {\em twins} if they have the same neighbours in the
graph, precisely $N(x)\setminus\{y\}=N(y)\setminus\{x\}$
(the concept of twin-width of graphs does not care about mutual adjacency 
of the identified vertices).

More generally, 
imagine we identify arbitrary two vertices $x_1,x_2$ in a graph $G$ into a new
vertex $x$; then the ordinary neighbours of new $x$ will capture what the
former neighbourhoods of $x_1$ and of $x_2$ in $G$ have had in common
(except each other of~$x_1,x_2$), and 
we will additionally create new {\em red edges} from $x$ to those vertices on
which the former neighbourhoods of $x_1$ and of $x_2$ disagreed in~$G$.
Moreover, all other previously created red edges at $x_1$ or $x_2$ will stay 
red and incident to~$x$ after the identification.
Note that the former vertices $x_1,x_2$ are removed and no loop is created on~$x$.
Precisely, denote by $N(x_i)$ the ordinary (``black'') neighbours of $x_i$ in $G$
and by $N_r(x_i)$ the red neighbours of $x_i$.
After the identification of $x_1$ and $x_2$ into~$x$,
the ordinary neighbours of $x$ will be the vertices of
$\big(N(x_1)\cap N(x_2)\big)\setminus\big(\{x_1,x_2\}\cup N_r(x_1)\cup N_r(x_2)\big)$, 
and the red edges of $x$ will go to the vertices of 
$\big((N(x_1)\Delta N(x_2))\cup N_r(x_1)\cup N_r(x_2)\big)\setminus\{x_1,x_2\}$.
With respect to this, a graph $G$ has twin-width $\leq d$ if one can
reduce $G$ down into a single vertex by successively identifying pairs of 
its vertices such that the maximum degree of the subgraph on the red edges
at every step of this reduction process is at most~$d$.

We are, in particular, interested in the algorithmic metatheorem area.
Namely, Bonnet et al.~\cite{DBLP:conf/focs/Bonnet0TW20} have proved that classes of 
binary relational structures (such as simple graphs and digraphs) of bounded twin-width
have efficient FO model checking algorithms.
In one of previous studies on algorithmic metatheorems for dense structures,
Gajarsk\'y et al.~\cite{DBLP:conf/focs/GajarskyHLOORS15} proved that posets
(which present a special case of simple digraphs)
of bounded width admit efficient FO model checking algorithms.
The {\em width of a poset} is the maximum size of an antichain in it.
Since \cite{DBLP:conf/focs/Bonnet0TW20} have also proved that posets of bounded
width have bounded twin-width, this directly generalizes the algorithmic metatheorem
of \cite{DBLP:conf/focs/GajarskyHLOORS15}.

The proof of bounded width of posets in~\cite{DBLP:conf/focs/Bonnet0TW20} is, however, indirect
(it uses a characterization by so-called mixed minors of the adjacency matrix) and gives, 
for posets of width $d$, only a very loose upper bound of $2^{2^{\cO(d)}}$ for the twin-width.
Although the proof in~\cite{DBLP:conf/focs/Bonnet0TW20} is, in principle, constructive,
its intricacy makes it really hard to understand why posets of bounded width
should have bounded twin-width, and which vertices to identify in the
reduction process.
In fact, as we will see in this paper, already for posets of width $2$ there
is no immediate way to optimally choose the pairs for identification.

The main contribution of our paper is in giving direct and tighter constructive linear lower and 
quadratic upper bounds for the twin-width of posets of width~$d$.
Precisely, the twin-width of such a poset in the worst case is at least $d-1$ 
and at most $9d-6$ (Proposition~\ref{pro:lowerd} and Theorem~\ref{thm:maind}).
Specially for posets of width $2$, we prove that their twin-width is also
at most $2$ and this bound cannot be further improved (Theorem~\ref{thm:main2}).
These results are accompanied by simple and fast algorithms to compute
the corresponding contraction sequence.

Our refined results on twin-width of posets can bring faster theoretical
algorithms for FO model checking of posets
\cite{DBLP:conf/focs/GajarskyHLOORS15} and of other classes which have
previously been reduced to posets of bounded width, such as
\cite{DBLP:journals/corr/abs-1302-6043,DBLP:journals/tocl/BovaGS16,DBLP:journals/comgeo/HlinenyPR19}.

\section{Preliminaries and formal definitions}

We consider only finite graphs.
Our graphs and digraphs are simple, meaning that they do not have parallel
edges or loops, except that a simple digraph may have up to one oriented
loop per vertex.
Formally, a graph is a pair $G=(V,E)$ such that $E\subseteq{V\choose2}$,
and a digraph is $G=(V,E)$ such that $E\subseteq V\times V$.
We deal with (finite) partially ordered sets, shortly {\em posets},
which we represent as reflexive, antisymmetric and transitive digraphs.
Let the {\em width of a poset $P$} be the maximum size of an antichain in
$P$, that is the maximum independent set size in the digraph~$P$.

We formally define twin-width using the ``matrix-partitioning'' view
of~\cite[Section~5]{DBLP:conf/focs/Bonnet0TW20}, and we restrict ourselves only to
the symmetric twin-width which is relevant to graphs and posets.
Let $\mx A$ be a square matrix with entries from a finite set $L$
(e.g., $L=\{0,1\}$ for undirected graphs and $L=\{0,1,-1,2\}$ for digraphs),
and assume that both the rows and the columns of $\mx A$ are indexed by the
same ground set~$X$.
Let $\ca R$ denote any partition of $X$ into nonempty sets.
For two parts $R,Q\in\ca R$, the submatrix of $\mx A$ formed by the rows
indexed by $R$ and the columns indexed by $Q$ is called the
{\em($R\times Q$) zone} of~$\mx A$.
Naturally, a zone of $\mx A$ is {\em constant} if all entries in the zone are equal.
For $P\in\ca R$, the {\em error value} (the ``red degree'') of a row $P$
(column~$P$) in $\mx A$ is the number of non-constant zones $P\times Q$ 
(zones $Q\times P$, respectively) in $\mx A$ over all $Q\in\ca R$
(including~$Q=P$).

\begin{definition}\label{df:symtww}
Let $\mx A$ be a square matrix with the rows and columns indexed by a ground
set~$X$, $|X|=n$.
We say that the {\em symmetric twin-width of~$\mx A$ is at most~$d$}
if there exists a sequence of partitions $\ca R^1,\ldots,\ca R^n$ of $X$
(a {\em contraction sequence}) such that;
\begin{itemize}
\item $\ca R^1$ is the finest partition of $X$ ($|\ca R^1|=n$)
and $\ca R^n$ is the coarsest partition of $X$ ($|\ca R^n|=1$),
\item for each $i=1,\ldots,n-1$, the partition $\ca R^{i+1}$ results by
merging (``contraction'' of) some two parts of $\ca R^i$, and
\item for each $i=1,\ldots,n$ and every $P\in\ca R^i$, 
the error value of the row $P$ and the column $P$ in $\mx A$ is at most~$d$.
\end{itemize}
\end{definition}

For a quick illustration, consider Definition~\ref{df:symtww} applied to the
adjacency matrix $\mx A_G$ of a graph~$G$.
Then the definition of symmetric twin-width of $\mx A_G$ coincides with the twin-width of
$G$ as stated in Section~\ref{introduction}, except that the diagonal zones
($Q\times Q$ for $Q\in\ca R^i$ at step $i$) of $\mx A_G$ are often
non-constant, and hence the symmetric twin-width of $\mx A_G$ may be equal
or by one higher than the twin-width of $G$ (this difference is neglected in~\cite{DBLP:conf/focs/Bonnet0TW20}).

For a poset $P=(X,\leq_P)$, viewed as a digraph on the ground set~$X$,
we consider the matrix $\mx A_P$ defined as follows (according to \cite{DBLP:conf/focs/Bonnet0TW20});
$a_{u,v}=1$ iff $u\leq_P v$, $a_{u,v}=-1$ iff $v\lneq_P u$, and $a_{u,v}=0$ otherwise%
\footnote{If we considered general digraphs (which are not always antisymmetric), 
we would also consider value $a_{u,v}=2$ iff both $(u,v),(v,u)$ were edges of the digraph.}.
The {\em symmetric twin-width of $P$} is the symmetric twin-width~of~$\mx A_P$.

For our purpose of giving a fine relation between the width and the
twin-width of posets it is, though, much more convenient to use a
specialized definition which we call a {\em natural twin-width} of posets
(for a distinction).
We shortly write $a\sim_Pb$ iff $a\leq_Pb$ or $b\leq_Pa$ (i.e., the vertices
$a,b$ are comparable in~$P$).
Our definition reads:

\begin{definition}[Natural twin-width of a poset]\label{df:nattww}
A triple $P=(X,\leq_P,R)$ is a {\em red poset} if $P_0=(X,\leq_P)$ is a poset
and $R$ is a set of unordered pairs of incomparable elements of $P_0$.
The {\em red degree} of $P$ is the maximum degree of the ``red'' graph~$(X,R)$.

A {\em contraction} of two vertices $x_1,x_2\in X$ of $P$ (into a new
vertex~$x$) creates the red poset $P'=(X',\leq_P',R')$ 
where $X'=(X\setminus\{x_1,x_2\})\cup\{x\}$ and
\begin{itemize}
\item $a\leq_P'b$ iff $a\leq_Pb$ for all $a,b\in X\setminus\{x_1,x_2\}$, and $x\leq_P'x$,
\item $a\leq_P'x$ (resp.~$x\leq_P'a$) iff $a\leq_Px_1$ and $a\leq_Px_2$
(resp.~$x_1\leq_Pa$ and $x_2\leq_Pa$),
\item $R'=(R\parc X')\cup R_1\cup R_2$ where
$R_1=\{\{a,x\}:\{a,x_1\}\in R\vee \{a,x_2\}\in R\}$ and
$R_2=\{\{a,x\}: a\not\sim_P'x \>\wedge (a\sim_Px_1\vee a\sim_Px_2)\}$.
\end{itemize}
In other words, the red edges $R'$ of $P'$ are; (i) those inherited from $P$
which compose of the restriction of $R$ to $X'$ and the red edges $R_1$ formerly
incident to $x_1$ or $x_2$ in~$P$,
and (ii) the new ones in $R_2$ between $x$ and those vertices of $X\setminus\{x_1,x_2\}$
which compared in $P$ to $x_1$ and to $x_2$ in different ways.

An (ordinary) poset $P_0=(X,\leq_{P_0})$ has the {\em natural twin-width at
most~$d$} if the red poset $(X,\leq_{P_0},\emptyset)$ can be reduced down to a single
vertex by a sequence of contractions such that, at each step, the red degree
is at most~$d$.
So, the natural twin-width of $P_0$ equals the minimum integer $d$ such that
$P_0$ has the natural twin-width at most~$d$.
\end{definition}

With respect to the definition of a contraction (in red posets), the
following is a useful convention:
If a red poset $P=(X,\leq_P,R)$ resulted by a sequence of contractions from
a poset $P_1=(X_1,\leq_{P_1})$, then a vertex $y\in X$ uniquely corresponds
to a set $Y\subseteq X_1$ of those vertices of $P_1$ which were contracted
down to~$y$, and hence we will chiefly refer to $Y$ with the name 
$y\subseteq X_1$.
Consequently, the vertices of such $X$ at the same time form a partition of~$X_1$~in~$P_1$
(and, with negligible abuse of notation, $X_1$ itself is viewed as the
partition of $X_1$ into singletons)
which brings us very close to Definition~\ref{df:symtww} of symmetric twin-width.

In accordance with this convention, we will sometimes denote the vertex resulting by a
contraction of the vertices $x_1$ and $x_2$ shortly by $x_1x_2$.
See an example in Figure~\ref{fig:exampleseq}.

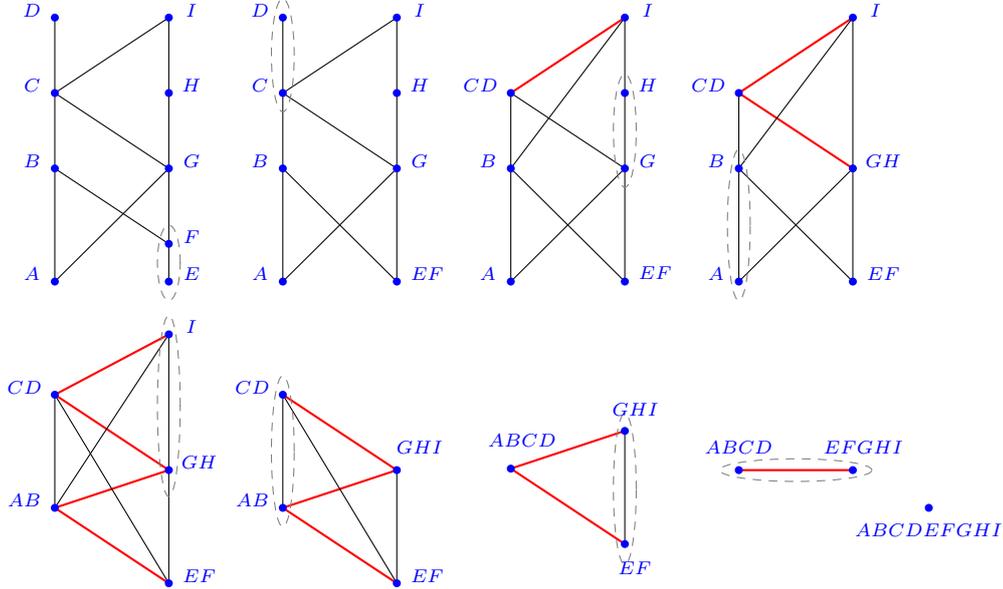
\begin{figure}[th]
$$
\begin{tikzpicture}[line cap=round,line join=round, x=1.0cm,y=1.0cm]
\draw (1,1)-- (1,2.5);
\draw (1,2.5)-- (1,3.5);
\draw (1,3.5)-- (1,4.5);
\draw (2.5,1)-- (2.5,1.5);
\draw (2.5,1.5)-- (2.5,2.5);
\draw (2.5,2.5)-- (2.5,3.5);
\draw (2.5,3.5)-- (2.5,4.5);
\draw (2.5,1.5)-- (1,2.5);
\draw (1,1)-- (2.5,2.5);
\draw (2.5,2.5)-- (1,3.5);
\draw (1,3.5)-- (2.5,4.5);
\draw (4,1)-- (4,2.5);
\draw (4,2.5)-- (4,3.5);
\draw (4,3.5)-- (4,4.5);
\draw (5.5,1)-- (5.5,2.5);
\draw (5.5,2.5)-- (5.5,3.5);
\draw (5.5,3.5)-- (5.5,4.5);
\draw (5.5,1)-- (4,2.5);
\draw (4,1)-- (5.5,2.5);
\draw (5.5,2.5)-- (4,3.5);
\draw (4,3.5)-- (5.5,4.5);
\draw (7,1)-- (7,2.5);
\draw (7,2.5)-- (7,3.5);
\draw (8.5,1)-- (8.5,2.5);
\draw (8.5,2.5)-- (8.5,3.5);
\draw (8.5,3.5)-- (8.5,4.5);
\draw (8.5,1)-- (7,2.5);
\draw (7,1)-- (8.5,2.5);
\draw (8.5,2.5)-- (7,3.5);
\draw [color=red, thick] (7,3.5)-- (8.5,4.5);
\draw (8.5,4.5)-- (7,2.5);
\draw (10,1)-- (10,2.5);
\draw (10,2.5)-- (10,3.5);
\draw (11.5,1)-- (11.5,2.5);
\draw (11.5,1)-- (10,2.5);
\draw (10,1)-- (11.5,2.5);
\draw [color=red, thick] (11.5,2.5)-- (10,3.5);
\draw [color=red, thick] (10,3.5)-- (11.5,4.5);
\draw (11.5,4.5)-- (10,2.5);
\draw (11.5,4.5)-- (11.5,2.5);
\draw (1,-2)-- (1,-0.5);
\draw (2.5,-3)-- (2.5,-1.5);
\draw [color=red, thick] (2.5,-3)-- (1,-2);
\draw [color=red, thick] (2.5,-1.5)-- (1,-0.5);
\draw [color=red, thick] (1,-0.5)-- (2.5,0.3);
\draw (2.5,0.3)-- (1,-2);
\draw (2.5,0.3)-- (2.5,-1.5);
\draw [color=red, thick] (1,-2)-- (2.5,-1.5);
\draw (4,-2)-- (4,-0.5);
\draw (5.5,-3)-- (5.5,-1.5);
\draw [color=red, thick] (5.5,-3)-- (4,-2);
\draw [color=red, thick] (5.5,-1.5)-- (4,-0.5);
\draw [color=red, thick] (4,-2)-- (5.5,-1.5);
\draw (2.5,-3)-- (1,-0.5);
\draw (5.5,-3)-- (4,-0.5);
\draw (8.5,-2.48)-- (8.5,-0.98);
\draw [color=red, thick] (8.5,-2.48)-- (7,-1.48);
\draw [color=red, thick] (7,-1.48)-- (8.5,-0.98);
\draw [color=red, thick] (11.5,-1.5)-- (10,-1.5);
\draw[color=gray,dashed] (2.5,1.25) ellipse (0.15 and 0.5);
\draw[color=gray,dashed] (4,4) ellipse (0.15 and 0.75);
\draw[color=gray,dashed] (8.5,3) ellipse (0.15 and 0.75);
\draw[color=gray,dashed] (10,1.75) ellipse (0.15 and 1);
\draw[color=gray,dashed] (2.5,-0.65) ellipse (0.15 and 1.2);
\draw[color=gray,dashed] (4,-1.25) ellipse (0.15 and 1);
\draw[color=gray,dashed] (8.5,-1.75) ellipse (0.15 and 1);
\draw[color=gray,dashed] (10.75,-1.5) ellipse (1 and 0.15);
\begin{scriptsize}
\fill [color=blue] (1,1) circle (1.5pt);
\draw[color=blue] (0.7,1.1) node {$A$};
\fill [color=blue] (1,2.5) circle (1.5pt);
\draw[color=blue] (0.7,2.6) node {$B$};
\fill [color=blue] (1,3.5) circle (1.5pt);
\draw[color=blue] (0.7,3.6) node {$C$};
\fill [color=blue] (1,4.5) circle (1.5pt);
\draw[color=blue] (0.7,4.6) node {$D$};
\fill [color=blue] (2.5,1) circle (1.5pt);
\draw[color=blue] (2.8,1.1) node {$E$};
\fill [color=blue] (2.5,1.5) circle (1.5pt);
\draw[color=blue] (2.8,1.6) node {$F$};
\fill [color=blue] (2.5,2.5) circle (1.5pt);
\draw[color=blue] (2.8,2.6) node {$G$};
\fill [color=blue] (2.5,3.5) circle (1.5pt);
\draw[color=blue] (2.8,3.6) node {$H$};
\fill [color=blue] (2.5,4.5) circle (1.5pt);
\draw[color=blue] (2.8,4.6) node {$I$};
\fill [color=blue] (4,1) circle (1.5pt);
\draw[color=blue] (3.7,1.1) node {$A$};
\fill [color=blue] (4,2.5) circle (1.5pt);
\draw[color=blue] (3.7,2.6) node {$B$};
\fill [color=blue] (4,3.5) circle (1.5pt);
\draw[color=blue] (3.7,3.6) node {$C$};
\fill [color=blue] (4,4.5) circle (1.5pt);
\draw[color=blue] (3.7,4.6) node {$D$};
\fill [color=blue] (5.5,1) circle (1.5pt);
\draw[color=blue] (5.9,1.1) node {$EF$};
\fill [color=blue] (5.5,2.5) circle (1.5pt);
\draw[color=blue] (5.8,2.6) node {$G$};
\fill [color=blue] (5.5,3.5) circle (1.5pt);
\draw[color=blue] (5.8,3.6) node {$H$};
\fill [color=blue] (5.5,4.5) circle (1.5pt);
\draw[color=blue] (5.8,4.6) node {$I$};
\fill [color=blue] (7,1) circle (1.5pt);
\draw[color=blue] (6.7,1.1) node {$A$};
\fill [color=blue] (7,2.5) circle (1.5pt);
\draw[color=blue] (6.7,2.6) node {$B$};
\fill [color=blue] (7,3.5) circle (1.5pt);
\draw[color=blue] (6.6,3.6) node {$CD$};
\fill [color=blue] (8.5,1) circle (1.5pt);
\draw[color=blue] (8.9,1.12) node {$EF$};
\fill [color=blue] (8.5,2.5) circle (1.5pt);
\draw[color=blue] (8.8,2.6) node {$G$};
\fill [color=blue] (8.5,3.5) circle (1.5pt);
\draw[color=blue] (8.8,3.6) node {$H$};
\fill [color=blue] (8.5,4.5) circle (1.5pt);
\draw[color=blue] (8.8,4.6) node {$I$};
\fill [color=blue] (10,1) circle (1.5pt);
\draw[color=blue] (9.7,1.1) node {$A$};
\fill [color=blue] (10,2.5) circle (1.5pt);
\draw[color=blue] (9.7,2.6) node {$B$};
\fill [color=blue] (10,3.5) circle (1.5pt);
\draw[color=blue] (9.6,3.6) node {$CD$};
\fill [color=blue] (11.5,1) circle (1.5pt);
\draw[color=blue] (11.9,1.1) node {$EF$};
\fill [color=blue] (11.5,2.5) circle (1.5pt);
\draw[color=blue] (11.9,2.6) node {$GH$};
\fill [color=blue] (11.5,4.5) circle (1.5pt);
\draw[color=blue] (11.8,4.6) node {$I$};
\fill [color=blue] (1,-2) circle (1.5pt);
\draw[color=blue] (0.6,-1.9) node {$AB$};
\fill [color=blue] (1,-0.5) circle (1.5pt);
\draw[color=blue] (0.6,-0.4) node {$CD$};
\fill [color=blue] (2.5,-3) circle (1.5pt);
\draw[color=blue] (2.9,-2.9) node {$EF$};
\fill [color=blue] (2.5,-1.5) circle (1.5pt);
\draw[color=blue] (2.9,-1.4) node {$GH$};
\fill [color=blue] (2.5,0.3) circle (1.5pt);
\draw[color=blue] (2.8,0.4) node {$I$};
\fill [color=blue] (4,-2) circle (1.5pt);
\draw[color=blue] (3.6,-1.9) node {$AB$};
\fill [color=blue] (4,-0.5) circle (1.5pt);
\draw[color=blue] (3.6,-0.4) node {$CD$};
\fill [color=blue] (5.5,-3) circle (1.5pt);
\draw[color=blue] (5.9,-2.9) node {$EF$};
\fill [color=blue] (5.5,-1.5) circle (1.5pt);
\draw[color=blue] (5.8,-1.2) node {$GHI$};
\fill [color=blue] (7,-1.48) circle (1.5pt);
\draw[color=blue] (7.15,-1.1) node {$ABCD$};
\fill [color=blue] (8.5,-2.48) circle (1.5pt);
\draw[color=blue] (8.63,-2.8) node {$EF$};
\fill [color=blue] (8.5,-0.98) circle (1.5pt);
\draw[color=blue] (8.64,-0.7) node {$GHI$};
\fill [color=blue] (10,-1.5) circle (1.5pt);
\draw[color=blue] (10,-1.2) node {$ABCD$};
\fill [color=blue] (11.5,-1.5) circle (1.5pt);
\draw[color=blue] (11.64,-1.2) node {$EFGHI$};
\fill [color=blue] (12.5,-2) circle (1.5pt);
\draw[color=blue] (12.5,-2.3) node {$ABCDEFGHI$};
\end{scriptsize}
\end{tikzpicture}
$$
\caption{An example of a contraction sequence for the top-left poset (each
step contracts the encircled pair), having the red degree at most~$2$.
As in Hasse diagrams, the edges (black) of the posets are oriented up, and we skip drawing edges
which are implied by reflexivity and transitivity.}
\label{fig:exampleseq}
\end{figure}

\begin{proposition}
If the symmetric twin-width of a poset $P$ is $d_s$ and the natural
twin-width of $P$ is $d$, then $d\leq d_s\leq d+1$.
\end{proposition}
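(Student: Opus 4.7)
The plan is to use one and the same contraction sequence $\ca R^1,\ldots,\ca R^n$ for both definitions and to compare the two quantities part-by-part. In other words, given a contraction sequence witnessing natural twin-width $d$, I would use it verbatim to bound $d_s$, and conversely for the other inequality.

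The key ingredient, which I would prove by induction on the contraction index $i$, is the following invariant: at every partition $\ca R^i$, for every pair of distinct parts $P,Q\in\ca R^i$, the zone $P\times Q$ of $\mx A_P$ is non-constant if and only if $\{P,Q\}$ is a red edge in the current red poset of Definition~\ref{df:nattww}. The base case is trivial, since $\ca R^1$ is the partition into singletons and the initial red set is empty. For the inductive step, when $x_1,x_2$ are contracted into $x$, the new zone $x\times Q$ (for any other part $Q$) is precisely the row-wise union of the previous zones $x_1\times Q$ and $x_2\times Q$. Its non-constancy thus has two sources: either (a) one of the two previous zones was already non-constant, in which case the corresponding red edge $\{x_1,Q\}$ or $\{x_2,Q\}$ existed by the induction hypothesis and is inherited via $R_1$ into $\{x,Q\}$; or (b) both previous zones were constant but with different values, in which case $x_1$ and $x_2$ compare with $Q$ inconsistently, and the two-sided agreement clause in the definition of $\leq_P'$ forces $x\not\sim_P' Q$ while at least one of $x_1,x_2$ was $\sim_P Q$, so $R_2$ creates $\{x,Q\}$. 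In the opposite direction, constancy of the new zone forces uniform behaviour of $x_1$ and $x_2$ towards $Q$, which by the induction hypothesis blocks inheritance through $R_1$ and prevents $R_2$ from firing. Pairs of parts not involving the contracted vertex are unchanged in both views.

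Once the invariant is in hand, the conclusion follows immediately. The red degree of $P$ in the natural twin-width counts red neighbours $Q\neq P$ and, by the invariant, equals the number of non-constant off-diagonal zones at $P$. The error value of row $P$ in the symmetric view additionally counts the diagonal zone $P\times P$, which contributes $0$ or $1$. The symmetry of the poset encoding (an all-$1$ zone $P\times Q$ corresponds to an all-$(-1)$ zone $Q\times P$, and an all-$0$ zone to an all-$0$ zone) ensures that non-constancy is symmetric across the diagonal, so the row-error and column-error of $P$ coincide, and we obtain \emph{red-degree of $P$} $\leq$ \emph{error value of $P$} $\leq$ \emph{red-degree of $P$} $+\,1$. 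Taking maxima along the sequence gives $d\leq d_s\leq d+1$.

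The main obstacle I expect lies in the case analysis inside the inductive step, specifically in verifying that the two-sided clause in the definition of $\leq_P'$ together with the rule $R_2$ exactly matches the situation in which two previously constant zones with different values combine into one non-constant zone. Everything else is bookkeeping.
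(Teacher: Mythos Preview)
Your proposal is correct and follows exactly the approach of the paper's own proof, which is only a one-sentence sketch (``a non-constant zone corresponds to a created red edge, and vice versa; the only difference is at the diagonal zones''). You have simply spelled out the induction and the case analysis that the paper leaves implicit, and your identification of the diagonal zone as the sole source of the possible $+1$ discrepancy matches the paper verbatim.
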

\begin{proof}[Proof~(sketch)]
We compare Definitions \ref{df:symtww} and~\ref{df:nattww} for the
same contraction sequence; a non-constant zone corresponds to a created red
edge, and vice versa.
The only difference is at the diagonal zones which may be non-constant while
natural twin-width does not consider red loops, and so the symmetric
twin-width may be by one higher than the natural one.
\end{proof}

\subsection{Simple lower bound}

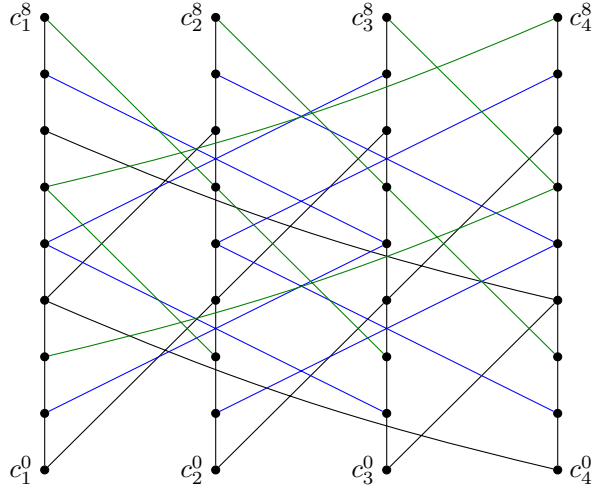
\begin{figure}[th]
$$
\begin{tikzpicture}[scale=0.75]
\tikzstyle{every node}=[draw, shape=circle, minimum size=3pt,inner sep=0pt, fill=black]
\node[label=left:$c_1^0$] at (0,0) (a0) {}; \node at (0,1) (a1) {}; \node at (0,2) (a2) {};
\node at (0,3) (a3) {}; \node at (0,4) (a4) {}; \node at (0,5) (a5) {}; 
\node at (0,6) (a6) {}; \node at (0,7) (a7) {}; \node[label=left:$c_1^8$] at (0,8) (a8) {}; 
\node[label=left:$c_2^0$] at (3,0) (b0) {}; \node at (3,1) (b1) {}; \node at (3,2) (b2) {};
\node at (3,3) (b3) {}; \node at (3,4) (b4) {}; \node at (3,5) (b5) {}; 
\node at (3,6) (b6) {}; \node at (3,7) (b7) {}; \node[label=left:$c_2^8$] at (3,8) (b8) {}; 
\node[label=left:$c_3^0$] at (6,0) (c0) {}; \node at (6,1) (c1) {}; \node at (6,2) (c2) {};
\node at (6,3) (c3) {}; \node at (6,4) (c4) {}; \node at (6,5) (c5) {}; 
\node at (6,6) (c6) {}; \node at (6,7) (c7) {}; \node[label=left:$c_3^8$] at (6,8) (c8) {}; 
\node[label=right:$c_4^0$] at (9,0) (d0) {}; \node at (9,1) (d1) {}; \node at (9,2) (d2) {};
\node at (9,3) (d3) {}; \node at (9,4) (d4) {}; \node at (9,5) (d5) {}; 
\node at (9,6) (d6) {}; \node at (9,7) (d7) {}; \node[label=right:$c_4^8$] at (9,8) (d8) {};
\draw (a0) -- (a8) (b0) -- (b8) (c0) -- (c8) (d0) -- (d8) ; 
\draw (a0) -- (b3) (b0) -- (c3) (c0) -- (d3) (d0) to[bend left=5] (a3) ;
\draw (a3) -- (b6) (b3) -- (c6) (c3) -- (d6) (d3) to[bend left=5] (a6) ;
\color{blue}
\draw (a1) -- (c4) (b1) -- (d4) (c1) -- (a4) (d1) -- (b4) ;
\draw (a4) -- (c7) (b4) -- (d7) (c4) -- (a7) (d4) -- (b7) ;
\color{green!50!black}
\draw (a2) to[bend right=5] (d5) (b2) -- (a5) (c2) -- (b5) (d2) -- (c5) ;
\draw (a5) to[bend right=5] (d8) (b5) -- (a8) (c5) -- (b8) (d5) -- (c8) ;
\end{tikzpicture}
$$
\caption{The poset from the proof of Proposition~\ref{pro:lowerd} for $d=4$
and $k=8$, depicted by its Hasse diagram.}
\label{fig:examplelowd}
\end{figure}

\begin{proposition}\label{pro:lowerd}
There exists a poset of width $d$ and the natural twin-width at least $d-1$.
\end{proposition}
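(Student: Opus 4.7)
The plan is to use the construction shown in Figure~\ref{fig:examplelowd}, suitably generalized to arbitrary $d$ and sufficiently large $k$. The poset $P_{d,k}$ consists of $d$ long chains $c_1,\ldots,c_d$, each with $k+1$ elements $c_i^0 <_P c_i^1 <_P \cdots <_P c_i^k$, joined by a rotationally symmetric family of ``length $3$'' cross-edges of the form $c_i^j <_P c_{i+r\bmod d}^{j+3}$, where the offset $r\in\{1,\ldots,d-1\}$ is determined by $r=(j\bmod (d-1))+1$ and hence cycles through all values as $j$ grows. First I would verify that the width of $P_{d,k}$ is exactly $d$: the $d$ bottoms $\{c_1^0,\ldots,c_d^0\}$ form an antichain, and the partition into the $d$ chains is a chain cover of size $d$, so by Dilworth's theorem the width equals $d$.

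For the twin-width lower bound, my plan is to show that the very first contraction of any contraction sequence on $P_{d,k}$ already creates red degree at least $d-1$ at the resulting merged vertex; this suffices to witness natural twin-width $\geq d-1$. So let $u,v$ be any two distinct vertices of $P_{d,k}$; by Definition~\ref{df:nattww}, the number of red edges produced at the merged vertex $uv$ equals the number of other vertices $w$ whose comparability pattern to $u$ and $v$ disagrees (either $w\sim_P u$ XOR $w\sim_P v$, or $w$ is comparable to both but from opposite sides).

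The heart of the argument is the rotational placement of the cross-edges. Take $u=c_i^{j}$ and $v=c_i^{j+1}$, two consecutive vertices of chain $c_i$ chosen well away from its endpoints. The direct cross-edge from $u$ enters some specific chain $c_{i'}=c_{i+r\bmod d}$ at height $j+3$, hence $u<_P c_{i'}^m$ for every $m\geq j+3$. However, the only way $v$ can reach chain $c_{i'}$ is to first climb chain $i$ until its own cross-edge again lands in $c_{i'}$, which by the modular rule requires advancing $d-1$ further levels in chain $i$ before taking the $+3$ cross-edge. Consequently $v<_P c_{i'}^m$ only for $m\geq j+3+(d-1)$. This yields exactly $d-1$ distinguishing vertices $c_{i'}^{j+3},c_{i'}^{j+4},\ldots,c_{i'}^{j+d+1}$, each of them producing a red edge at $uv$; the other possible choices of $u,v$ (in different chains, or far apart within a chain) are handled analogously and give at least as many red edges.

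The main obstacle will be the careful bookkeeping of ``reach'' in $P_{d,k}$, in particular confirming that the gap of $d-1$ levels between the reaches of $u$ and $v$ into the target chain $c_{i'}$ cannot be shortened by any alternative transitive path through other chains. This is precisely where the combination of the $+3$ spacing and the mod-$(d-1)$ rotation is used: any alternative path must cross at least $d-1$ cross-edges to cycle the rotation back to chain $c_{i'}$, but each cross-edge raises the level by $3$, so every shortcut costs at least $3(d-1)>d-1$ levels and cannot close the gap. Choosing $k$ sufficiently large (say $k\geq 3d$) ensures that a suitable pair $u,v$ exists away from the endpoints of the chain, so no boundary effect interferes with the count, and the lower bound follows.
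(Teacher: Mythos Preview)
Your plan follows the same overall strategy as the paper (show that \emph{every} first contraction already produces red degree $\geq d-1$), but the construction you describe is not the one in the paper, and the difference breaks your argument. In the paper the cross-edges are $c_i^{j}\leq_P c_{i+a}^{\,j+d-1}$, i.e.\ they jump $d-1$ levels, not~$3$. The value $3$ in Figure~\ref{fig:examplelowd} is specific to the case $d=4$ (where $d-1=3$); your ``$+3$'' generalization does not give the required lower bound.

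Concretely, take $d=10$ in your poset, $u=c_i^{0}$ and $v=c_i^{1}$. The direct cross-edge sends $u$ to $c_{i+1}^{3}$, so $u\leq_P c_{i+1}^{m}$ for all $m\geq 3$. But $v$ reaches $c_{i+1}$ already at level~$9$, not $j+d+2=12$: from $v=c_i^{1}$ climb to $c_i^{3}$, take the cross-edge (offset $(3\bmod 9)+1=4$) to $c_{i+4}^{6}$, and then take the cross-edge (offset $(6\bmod 9)+1=7$) to $c_{i+11}^{9}=c_{i+1}^{9}$. Thus the gap in chain $c_{i+1}$ is at most $6$, not $d-1=9$. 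The sentence ``any alternative path must cross at least $d-1$ cross-edges to cycle the rotation back to chain $c_{i'}$'' is simply false: the rotation offset depends only on the \emph{level} $j$ (mod $d-1$), not on the chain, and the chain index is taken mod~$d$; two well-chosen cross-edges suffice. With the paper's jump $d-1$, this shortcut is blocked because any two-edge path already costs $2(d-1)$ levels, which exceeds the target gap of $d-1$. So your construction needs the cross-edge jump to grow with~$d$; the ``$+3$'' spacing and the accompanying $3(d-1)>d-1$ inequality do not do the job.
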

\begin{proof}
For~$d=2$, we simply take a poset which has no pair of twin vertices,
and so any contraction in it creates a red edge, that is, red degree $d-1=1$.
E.g., we take the poset formed by the divisibility relation on the set
$\{2,3,4,6\}$.

For $d\geq3$ and $k\geq4d-8$, we construct a poset $P$ on a ground set of $n=d(k+1)$ vertices 
$C_1\cup C_2\cup\ldots\cup C_d$ where each $C_i=\{c_i^0,c_i^1,\ldots,c_i^k\}$ is a
chain; $c_i^0\leq_P c_i^1\leq_P \ldots\leq_P c_i^k$.
In the description of~$P$, we consider indices $i$ ``modulo $d$'', formally,
we set $c_{d+1}^j=c_1^j$, \dots, $c_{d+d}^j=c_{d}^j$.
Furthermore, for $i=1,\ldots,d$ and $j=0,\ldots,k-d+1$,
with \mbox{$a=1+\,j\!\mod(d-1)$}, we declare $c_i^{j}\leq_P c_{i+a}^{j+d-1}$.
The rest of $\leq_P$ follows by the reflexive and transitive closure.
See an example of this construction in Figure~\ref{fig:examplelowd},
and note that the inverse of $P$ is isomorphic to $P$
(informally, turning $P$ ``upside down'' gives the same poset)
which will be used to reduce the number of cases in the coming arguments by symmetry.

Our aim is to prove that a contraction of any pair of vertices of $P$
already gives red degree~$\geq d-1$.
Suppose first that the contracted pair is from the same chain~$C_i$, e.g.,
$c_i^j$ and $c_i^h$ for $0\leq j<h\leq k$.
We may assume $j\leq k-2d+3$, since otherwise we would have $h\geq j+1\geq k-2d+5\geq2d-3$
and could apply the symmetric argument.
Then, for \mbox{$a=1+\,j\!\mod(d-1)$},\, $c_i^{j}\leq_P c_{i+a}^{j+d-1}$ but
$c_i^{h}\not\leq_P c_{i+a}^{j+2d-3}$.
Therefore, the contracted vertex $c_i^{j}c_i^{h}$ has at least
$(j+2d-3)+1-(j+d-1)=d-1$ incident red edges to the chain~$C_{i+a}$.

Suppose now that the contracted pair is $c_i^j$ and $c_y^h$ where $i\not=y$
and $0\leq j\leq h\leq k$.
Again, we may assume by symmetry that $j\leq k-d+1$.
If $h>j$, then $c_y^h\not\leq_P c_i^{j+d-1}$ and the contracted vertex
$c_y^hc_i^j$ has at least $d-1$ incident red edges to the chain~$C_{i}$.
If $h=j$, then $c_y^hc_i^j$ similarly has at least $2(d-2)\geq d-1$ 
incident red edges to the chains~$C_{i}$ and~$C_y$.
\end{proof}

\section{Upper bound for posets of width \boldmath$d$}
\label{sec:twwd}

Complementing Proposition~\ref{pro:lowerd}, we give the core upper
estimate followed by its proof:
\begin{theorem}\label{thm:maind}
A poset of width $d$ has the natural twin-width at most $9d-6$,
and hence the symmetric (matrix) twin-width at most~$9d-5$.
The corresponding contraction sequence can be found in time
$\ca O(dn^2)$ where $n$ is the number of vertices of the poset.
\end{theorem}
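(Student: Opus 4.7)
The plan is to start from a Dilworth decomposition of $P$ into $d$ chains $C_1,\dots,C_d$ and to design a contraction sequence preserving a strong structural \emph{interval invariant}: at every step, each super-vertex of the running red poset can be identified with a set of original vertices which, for every chain $C_i$, forms a (possibly empty) contiguous interval $I_i \subseteq C_i$. This invariant is easy to maintain if we only contract super-vertices whose per-chain intervals are either identical or strictly adjacent. It will serve as the backbone of the red-degree analysis, because the comparability between two such super-vertices $S,T$ is governed entirely by the endpoints of the $d$ interval pairs $(I_i(S), I_i(T))$: $S \leq_P T$ holds iff for every $i$ the interval $I_i(S)$ lies entirely weakly below $I_i(T)$ in $C_i$, and otherwise $S, T$ are incomparable.

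Next, I would process the chains in a coordinated bottom-up ``sliding window'' fashion. At every step there is, for each chain $C_i$, an already-consolidated prefix bucket $B_i$ and a short frontier of at most a constant number of not yet absorbed elements of $C_i$ just above $B_i$. A single contraction absorbs one frontier element into its bucket. The key local observation is that, when the bucket $B_i$ absorbs its next chain element $x$, the only elements of another chain $C_l$ whose comparability status with $B_i\cup\{x\}$ differs from that with $B_i$ are those lying in the short interval of $C_l$ that is comparable to $x$ but not to the old $\max(B_i)$. Hence each absorption creates only a bounded number of new red edges to each $C_l$, and by symmetry when $C_l$'s own window advances past $B_i$'s incomparability zone in $C_l$, the corresponding old red edges between $B_i$ and the (now incorporated) elements of $C_l$ get discharged. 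A careful charging scheme, summing the simultaneously surviving red edges of the three types ``bucket-to-bucket'', ``bucket-to-frontier'' and ``frontier-to-frontier'' across the $d-1$ other chains, yields the promised bound of $9d-6$ on the natural red degree; the additional $+1$ for the symmetric matrix twin-width is the generic loss from non-constant diagonal zones already recorded in the Proposition preceding this theorem.

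The main obstacle I expect is in the careful synchronization between the $d$ windows so that red edges are indeed discharged as fast as they are created. In particular one has to decide in which order the $d$ windows should advance at any given time; a naive ``always advance the lowest frontier'' rule might let one chain's window lag so far behind that its bucket accumulates too many red edges to the already-advanced windows of other chains. I would therefore advance the windows in the order of a fixed linear extension of $P$ and verify by induction on the contraction steps that the per-pair red-edge count never exceeds the budgeted constant. Once this is established, the running time $\mathcal O(dn^2)$ follows from the fact that the algorithm performs $n-1$ contractions, at each of which it only needs to scan the $O(d)$ endpoints defining the currently affected intervals and update the incident red-edge lists of the new super-vertex, costing $O(dn)$ per step in a straightforward implementation.
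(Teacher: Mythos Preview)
Your proposal sketches a \emph{structured, order-driven} contraction scheme, whereas the paper's proof is \emph{greedy and averaging-based}.  In the paper, one never fixes a processing order in advance; instead, at every step one computes for each consecutive pair in each chain its \emph{red potential} (the red degree the contracted vertex would have), proves via a counting lemma that the total red potential over all $m-d$ candidate pairs is at most $6(d-1)m$, and then simply picks the pair of minimum potential.  An averaging argument gives a single contraction of red potential at most $7(d-1)$ whenever $m\ge 7d$, and a separate ``foreign edge'' bound of $2(d-1)$ (Lemma~\ref{lem:bars}\eqref{it:oneredto}) handles red edges that arrive at a vertex after its own last contraction.  No synchronization or charging between chains is needed.

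Your approach, by contrast, commits to an order (``advance the windows in the order of a fixed linear extension'') and hopes a charging scheme will keep the red degree bounded.  This is where the proposal has a genuine gap: the linear-extension rule can fail outright.  Take $d=2$ with chains $C_1=(a_1<a_2)$ and $C_2=(b_1<\cdots<b_k)$, relations $a_1<b_j$ for all~$j$, and $a_2$ incomparable to every~$b_j$.  The sequence $a_1,a_2,b_1,\dots,b_k$ is a valid linear extension, and following it your first contraction merges $a_1,a_2$ into $B_1$.  Now $B_1$ has a red edge to \emph{every} $b_j$ (since $a_1<b_j$ but $a_2\not\sim b_j$), giving red degree~$k$, which is unbounded.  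So ``a fixed linear extension'' is not enough; you would have to specify \emph{which} one and prove it always works, and you have not done so.  The paper's averaging argument sidesteps exactly this difficulty: in the same example the pair $(a_1,a_2)$ has red potential~$k$, but each pair $(b_j,b_{j+1})$ has red potential~$0$, so the greedy rule contracts all of $C_2$ first.  Also note that your auxiliary claim ``$S\le_P T$ iff for every $i$ the interval $I_i(S)$ lies weakly below $I_i(T)$ in $C_i$'' is false as stated, since for $S\subseteq C_1$ and $T\subseteq C_2$ the condition is vacuous while $S\le_P T$ depends on cross-chain relations in~$P_0$.
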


By Dilworth's theorem, a poset $P=(X,\leq_P)$ is of width $d$ if and only if
the ground set $X$ can be partitioned into at most $d$ {\em chains} (a chain
is linearly ordered by~$\leq_P$).
Hence, from now on, we will consider a poset of width $d$ with a fixed
partition $\pi$ of $X$ into $d$ (nonempty) chains, formally as a triple
$P=(X,\leq_P,\pi)$ where $\pi=\{U_1,\ldots,U_d\}$.

Our upper bound in Theorem~\ref{thm:maind} will use only a special type of
contractions -- of two consecutive vertices of the same chain of $P$.
Since contractions inside a chain essentially preserve the chain partition of $P$, 
we will for simplicity refer to the new chain partition as to $\pi$ again.
We shall thus work with the following special kind of red posets, which
result from a chain-partitioned poset by our special contractions:

\begin{definition}[red $d$-neighbourly poset]\label{df:redneighbposet}
Let $P_0=(X_0,\leq_{P_0},\pi)$ be a poset partitioned by $\pi$ into $d$ chains.
A {\em neighbourly contraction} is a contraction of a vertex pair $x_1,x_2$
such that $x_1\not=x_2$ belong to the same chain $U$ of $\pi$ and they are
consecutive in this chain (i.e., no element of $U$ is strictly between $x_1$ and $x_2$).
Note, however, that for such a pair $x_1,x_2$ there could exist a vertex $y$
in another chain of $\pi$ such that $x_1\lneq y\lneq x_2$.

A tuple $P=(X,\leq_P,\pi,R)$ is called a {\em red $d$-neighbourly
poset} (shortly a {\em neighbourly poset}) if the red poset $(X,\leq_P,R)$ is
obtained from $P_0$ by an arbitrary sequence of neighbourly contractions
(we shortly say that $P$ is a {\em contraction of~$P_0$}).
\end{definition}

Roughly speaking, our proof of Theorem~\ref{thm:maind} is going to argue
that, although some neighbourly contractions can create many new red edges,
overall the number of red edges that can potentially be created by every
neighbourly contraction is only proportional to the size of the poset.
Therefore, one can always find a ``good'' neighbourly contraction.
Later on in the contraction sequence, we also have to watch the number 
of previously created red edges which is straightforward.
Of course, to fulfill Definition~\ref{df:nattww}, we will have to contract
the remaining $d$ single-vertex chains at the end together, but that part will be a
trivial conclusion of our proof, and so we neglect it in the coming technical arguments.

When dealing with neighbourly posets such as $P=(X,\leq_P,\pi,R)$, 
we adopt some special notation.
Consider a chain $U=(u_1,u_2,\ldots,u_m)$ ordered as $u_1\leq_P\ldots,\leq_Pu_m$.
For a vertex $x\in X$ such that $x=u_a$ of $U$, we shall write 
$x^{+i}$ for the vertex $u_{a+i}$ and $x^{-i}$ for the vertex $u_{a-i}$
(of course, assuming $a+i\leq m$ or $a-i\geq 1$, respectively).
Let $x^+$ and $x^-$ be a shorthand for $x^{+1}$ and $x^{-1}$.

We give a unified way of picturing neighbourly posets
-- a {\em chain diagram} (already seen in Figure~\ref{fig:exampleseq}), 
which is close to the traditional Hasse diagram of a poset, but not exactly~the~same.
\begin{definition}[chain diagram]\label{df:chaindia}
Let $P$ be a red $d$-neighbourly poset.
Every chain of $P$ is drawn as a vertical line,
the red edges of $R$ are drawn as {\em red bars} between pairs of the chains, and
there is a {\em black bar} from a vertex $u$ of a chain $U$ to a vertex $v$
of a chain $V\not=U$, if and only if $v$ is the least vertex of $V$ greater than $u$
and $u$ is the greatest vertex of $U$ smaller than~$v$.%
\footnote{ Notice that since $R$ contains only incomparable pairs, there cannot be a red
and a black bar together between the same pair.}
A black bar is never drawn as horizontal and is implicitly directed up in the picture.
\end{definition}

\subsection{Structure of black and red bars}

While black bars of a neighbourly poset $P$ are directed by $\leq_P$,
that is we have a black bar $(u,v)$ if $u\leq_Pv$ as in Definition~\ref{df:chaindia},
red bars are by Definition~\ref{df:nattww} undirected.
Nevertheless, we can assign a direction to a red bar $\{u,v\}\in R$ as follows.

\begin{definition}[orienting the red bars]\label{df:redorient}
Let the neighbourly poset $P=(X,\leq_P,\pi,R)$ be a contraction of an ordinary poset
$P_0=(X_0,\leq_{P_0},\pi)$, and recall that we view $u\in X$ as the subset
$u\subseteq X_0$ of the respective contracted vertices of~$P_0$.
Since $u\subseteq X_0$ belongs to one chain of $P_0$ by Definition~\ref{df:redneighbposet},
the minimum $min(u)$ of $u$ is well-defined.

We orient the red bar $\{u,v\}\in R$ as $(u,v)$, from $u$ to~$v$, 
if $min(u)\leq_{P_0}x$ for some $x\in v$, but
$min(v)\not\leq_{P_0}min(u)$.
Though, if both directions $(u,v)$ and $(v,u)$ are assigned by this
criterion (which is possible, e.g., when the minima of $u$ and $v$ are
incomparable), then we choose $(u,v)$ if the last contraction into $u$
happened later than that into~$v$.%
\footnote{The latter criterion of choosing between $(u,v)$ and $(v,u)$ is 
not really important; we introduce it only to ``break the tie'' in a deterministic way.}
\end{definition}

Observe that at least one of the options for $\{u,v\}\in R$ in
Definition~\ref{df:redorient} must happen.
As an informal explanation, a red bar $\{u,v\}\in R$ in $P$ means that between the
sets $u,v\subseteq X_0$ in $P_0$, the edges (and non-edges) are not uniform
(not all in one direction),
and then we choose the ``prevailing direction'' for the orientation of $\{u,v\}$.
We shall write a red bar as $\{u,v\}\in R$ if we do not care about the
orientation of it, and as $(u,v)\in R$ if we~do~care.

Now we summarize basic technical properties used in further proofs.

\begin{lemma}\label{lem:bars}
Let $P=(X,\leq_P,\pi,R)$ be a $d$-neighbourly poset
and $U$ and $V$ be two distinct chains of $P$ determined by~$\pi$.
\begin{enumerate}[a)]
\item \label{it:blackup}
If $u\leq_Pv$ where $u\in U$ and $v\in V$, then there is no $i>0$ such that $(u,v^{+i})\in R$.
Analogously, if $u\geq_Pv$, then there is no $i>0$ such that $(v^{-i},u)\in R$.
\item \label{it:blackred}
If $(u,v)\in R$ where $u\in U$ and $v\in V$ such that $(u,v^+)\not\in R$, then $u\leq_{P}v^+$
(informally, red bars oriented from $u$ to vertices of $V\!$
come in a consecutive strip ``capped'' by a black~bar).
\\An analogous claim symmetrically holds for red bars oriented towards $u$ from $V$.
\item \label{it:rednocross}
If $(u,v)\in R$ where $u\in U$ and $v\in V$, then there are no $i,j>0$ such that 
$(u^{+i},v^{-j})\in R$ or $u^{+i}\leq_Pv^{-j}$
(informally, no two red bars of the same orientation from $U$ to $V$ may~``cross'', 
and no black bar from $U$ to $V$ may be ``crossed'' by a red bar starting below it~in~$U$).
\item \label{it:numredbars}
There are together at most $|U|+|V|$ red bars from a vertex of $U$ to
a vertex of~$V$.
\item \label{it:oneredto}
Assume $u\in U$, $v\in V$ such that $(u,v)\in R$ is a red bar that has been
newly created by a neighbourly contraction into $u$, 
and that no contraction into $v$ has happened after the creation of red $(u,v)$.
Then no neighbourly contraction in $P$ in the chain $U$ 
can create another red bar oriented towards~$v$.
\\An analogous claim holds for $(v,u)\in R$ and creation of red bars oriented from~$v$.
\end{enumerate}
\end{lemma}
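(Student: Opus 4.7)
My plan is to prove the five items in order, since each leans on the structure built by its predecessors. As a preliminary I isolate a characterization, provable by a short induction on the contraction sequence using the rule $R'=(R\parc X')\cup R_1\cup R_2$: for contracted vertices $u,v$ viewed as subsets of $X_0$, one has $\{u,v\}\in R$ iff $u\not\sim_P v$ and at least one pair $(a,b)\in u\times v$ is comparable in~$P_0$. Equivalently, $\{u,v\}\notin R$ means $u\sim_P v$ or every pair in $u\times v$ is incomparable in~$P_0$. I will also use that $u\leq_P w$ in the contracted poset is exactly the statement ``every $(a,b)\in u\times w$ satisfies $a\leq_{P_0}b$'', which yields the chain monotonicities invoked below.

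For (a), $u\leq_P v\leq_P v^{+i}$ by transitivity and the chain order on~$V$, so $u$ and $v^{+i}$ are comparable and cannot form a red bar. For (b), the orientation $(u,v)$ forces $\min(u)\leq_{P_0}x$ for some $x\in v$, and since $x\leq_{P_0}\min(v^+)$ by the chain order, this yields $\min(u)\leq_{P_0}\min(v^+)$. I then split on why $(u,v^+)\notin R$: if $u\geq_P v^+$ then $u\geq_P v^+\geq_P v$, contradicting $(u,v)\in R$; if $u,v^+$ are incomparable but not red, the characterization forces every pair in $u\times v^+$ to be incomparable in~$P_0$, contradicting $\min(u)\leq_{P_0}\min(v^+)$; the remaining option is $u\leq_P v^+$. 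For (c), the black case $u^{+i}\leq_P v^{-j}$ chains with $u\leq_P u^{+i}$ and $v^{-j}\leq_P v$ to give $u\leq_P v$, contradiction. In the red case, the orientation of $(u^{+i},v^{-j})$ similarly yields $\min(u^{+i})\leq_{P_0}\min(v)$; then $a\leq_{P_0}\min(u^{+i})$ for each $a\in u$ and $\min(v)\leq_{P_0}b$ for each $b\in v$ together give $a\leq_{P_0}b$ throughout $u\times v$, so $u\leq_P v$, contradiction.

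For (d), let $R_{\to}\subseteq R$ denote the bars oriented from~$U$ to~$V$ and $R_{\leftarrow}$ those oriented the other way. Parts (b) and (c) together show that $R_{\to}$ is a chain in the coordinate order on $U\times V$: part (c) forbids $u_1<u_2$ paired with $v_1>v_2$, and (b) forces the $v$-values at each fixed~$u$ to be contiguous. Hence $|R_{\to}|\leq|U|+|V|-1$ (the length of the longest coordinate chain), and symmetrically for $R_{\leftarrow}$. For the sharper combined bound, comparing the orientation criteria for $(u,v)$ and $(v,u)$ shows that at each fixed~$u$ the $R_{\to}$-strip in~$V$ lies strictly above the $R_{\leftarrow}$-strip, so $R_{\to}$ and $R_{\leftarrow}$ occupy, respectively, the ``upper-left'' and ``lower-right'' regions of the $U\times V$ grid. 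A careful charging of each bar to a vertex in $U\cup V$ exploiting this separation and the black caps from~(b) then yields the claimed bound $|U|+|V|$.

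For (e), I case-analyse the new neighbourly contraction of consecutive $u'_1,u'_2\in U$ into~$u''$. Let $u^*$ be the current descendant of~$u$; since $u'_1,u'_2$ are consecutive in~$U$, $u^*$ is either equal to one of $u'_1,u'_2$, or lies strictly below~$u'_1$, or strictly above~$u'_2$. A \emph{newly created} $\{u'',v\}\in R_2$ (not inherited via $R_1$) requires $u''\not\sim_P v$, $\{u'_1,v\},\{u'_2,v\}\notin R$, and $u'_i\sim_P v$ for some~$i$. Combining the latter with $u^*\not\sim_P v$ (inherited red bar $\{u^*,v\}$) and with the chain inequalities $u^*\leq_P u'_1$ or $u'_2\leq_P u^*$ forces every configuration either to make the bar inherited from an existing red one (hence not new), to contradict the incomparability of $(u^*,v)$, or to compute the orientation as $(v,u'')$ rather than $(u'',v)$, so not towards~$v$. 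I expect the tight counting in (d) to be the single most delicate step; the remaining items reduce cleanly to transitivity, to the orientation criterion, and to the red-bar characterization.
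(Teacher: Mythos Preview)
Your arguments for (a), (b), (c) are essentially the paper's arguments, only spelled out in more detail (your red-bar characterization is implicit in the paper's use of ``homogeneity'' and of~$\min/\max$).

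For (d) you take a genuinely different route. The paper proves the bound by induction: it looks at $u_1=\min(U)$, lets $q$ be the number of red bars oriented from $u_1$ into~$V$, observes via~(c) that no bar oriented from $U\setminus\{u_1\}$ to~$V$ can land in the bottom $q-1$ vertices $V_1$ of~$V$, removes $\{u_1\}\cup V_1$, and recurses. Your argument instead reads~(c) as saying that $R_{\to}$ is an antichain-free set in the grid $U\times V$, hence a monotone staircase of size at most $|U|+|V|-1$. That is correct and arguably cleaner. Note, however, that the statement of~(d) is \emph{directional} (``from a vertex of~$U$ to a vertex of~$V$''), and the paper's proof likewise only tracks bars oriented from $U$ to~$V$; so your staircase bound already finishes~(d). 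Your attempt at a sharper undirected bound $|R_{\to}|+|R_{\leftarrow}|\le |U|+|V|$ is not needed, and the ``careful charging'' you promise for it is not actually carried out---you should drop that paragraph rather than leave it as a dangling sketch.

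For (e) the paper is much more direct than your case analysis: from ``$(u,v)$ newly created by a contraction into~$u$'' and its orientation it extracts the single fact $\min(u)\le_{P_0}\min(v)$, and then everything below $u$ in~$U$ is automatically $\le_P v$ (so no red bar can appear there), while anything created above~$u$ must be oriented from~$v$ by Definition~\ref{df:redorient}. Your case split on the position of $u'_1,u'_2$ relative to~$u^*$ can be made to work, but as written it defers the actual verification (``forces every configuration either to\ldots'') to the reader; isolating $\min(u)\le_{P_0}\min(v)$ first would let you collapse all your cases into two lines.
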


\begin{proof}
Let $P$ be a contraction of the ordinary poset $P_0=(X_0,\leq_{P_0},\pi)$.

\ref{it:blackup})
Trivially by transitivity, $u\leq_Pv\leq_Pv^{+i}$ contradicts that pairs in
$R$ are incomparable.

\smallskip\ref{it:blackred})
By $(u,v)\in R$ and Definition~\ref{df:redorient}, for some $x\in v$ of $P_0$ 
we have $min(u)\leq_{P_0} x\leq_{P_0} min(v^+)$.
Then, if $max(u)\not\leq_{P_0} min(v^+)$, we would have forbidden $(u,v^+)\in R$.
Therefore, by homogeneity of the edges from $u$ to $v^+$ in $P_0$,
we get desired $u\leq_{P}v^+$.

\smallskip\ref{it:rednocross})
Assume the contrary, that $(u^{+i},v^{-j})\in R$.
Then, by Definition~\ref{df:redorient} and transitivity in $P_0$,
$max(u)\leq_{P_0} min(u^{+i})\leq_{P_0} x\leq_{P_0} min(v)$ where
$x\in v^{-j}$, which contradicts the assumption~$\{u,v\}\in R$.
The same argument goes through if $u^{+i}\leq_Pv^{-j}$ with $x=min(v^{-j})$.

\smallskip\ref{it:numredbars})
We ignore the chains other than $U$ or~$V$.
Let $u_1$ be the minimum of the chain $U$, and assume that there are $q\geq0$
red bars oriented from $u_1$ to $V$.
Let $V_1\subseteq V$ be the lowest $q-1$ vertices of the chain~$V$.
By \eqref{it:rednocross}, only red bars from $u_1$ may end in $V_1$.
So, we remove the vertices $\{u_1\}\cup V_1$ and, by induction,
there are at most $|U|-1+|V|-(q-1)=|U|+|V|-q$ red bars from
$U\setminus\{u_1\}$ to $V\setminus V_1$.
With the $q$ red bars starting in $u_1$ we get the desired~bound.

\smallskip\ref{it:oneredto})
If $(u,v)\in R$ has been created by a contraction into $u$,
and not by a prior contraction into~$v$, then
$min(u)\leq_{P_0}min(v)$ using Definition~\ref{df:nattww}.
Hence a further neighbourly contraction in the chain $U$ below $u$
cannot at all create a red bar incident to~$v$,
and a neighbourly contraction in the chain $U$ above $u$ can only create a
new red bar oriented from $v$,
according to Definition~\ref{df:redorient}.
\end{proof}

\subsection{Minimizing the red potential}

Now comes the core of the proof of Theorem~\ref{thm:maind}, estimating how many red edges can
potentially result from all possible neighbourly contractions in~$P$.
Let $u\in U$ be a vertex of a chain $U$ which is not maximal,
$u^0$ be the vertex created by the contraction of $u$ and $u^+$,
and define the {\em red potential of $u$} as the number of red edges
incident to $u^0$ after the contraction (so, previous red edges incident to
$u$ or $u^+$ are also counted here).
The {\em red potential of the chain $U$} is simply the sum of red potentials
over the non-max vertices of $U$, and the {\em red potential of $P$} is the
sum over all chains of $P$.

\begin{lemma}\label{lem:redpot}
\begin{enumerate}[a)]
\item If $P=(X,\leq_P,\pi,R)$ is a red $d$-neighbourly poset with $m=|X|$ elements,
then the red potential of $P$ is at most $2(d-1)m+2|R|$.
\item There are at most $|R|\leq2(d-1)m$ red bars in~$P$.
\end{enumerate}
\end{lemma}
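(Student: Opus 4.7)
Part~(b) is a direct consequence of Lemma~\ref{lem:bars}(\ref{it:numredbars}): summing the bound $|U|+|V|$ over the $\binom{d}{2}$ unordered pairs of distinct chains gives $\sum_{\{U,V\}}(|U|+|V|)=(d-1)\sum_W|W|=(d-1)m$, because each chain $W$ appears in exactly $d-1$ such pairs.

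For part~(a), the plan is to split, for every non-max vertex $u\in U$, the red edges incident to $u^0$ (the would-be merger of $u$ with $u^+$) according to Definition~\ref{df:nattww}: the \emph{inherited} edges arise from red bars of $R$ formerly incident to $u$ or $u^+$ (the set $R_1$), while the \emph{newly-created} edges come from the set $R_2$. The inherited contribution is handled by a simple double-counting argument: each red bar $\{x,y\}\in R$ is inherited during at most four future neighbourly contractions -- on the $x$-side only when $x$ plays the role of $u$ (the one contraction of $x$ with $x^+$) or of $u^+$ (the one contraction of $x^-$ with $x$), and analogously on the $y$-side. Summed over all non-max $u$, this contributes at most $4|R|$ to the red potential.

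The main work is in bounding the newly-created edges. Fixing an ordered pair of distinct chains $(U,V)$, I define for each non-max $u\in U$ the \emph{downward gap} $D^{\downarrow}_u:=\{v\in V:v\leq_P u^+ \text{ and } v\not\leq_P u\}$ and the \emph{upward gap} $D^{\uparrow}_u:=\{v\in V:u\leq_P v \text{ and } u^+\not\leq_P v\}$. A short case analysis on how $a\in V$ compares to $u$ and to $u^+$ (using $u\leq_P u^+$ and the contraction rule $a\leq_P'u^0\iff a\leq_P u\wedge a\leq_P u^+$) will show that the new red edges in $R_2$ from $u^0$ to $V$ are exactly those with the other endpoint in $D^{\downarrow}_u\cup D^{\uparrow}_u$. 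The key observation is that these sets are pairwise disjoint as $u$ ranges over non-max vertices of $U$: if $u$ strictly precedes $u'$ in $U$ then $u^+\leq_P u'$, so any $v\in D^{\downarrow}_u$ satisfies $v\leq_P u^+\leq_P u'$, preventing $v\in D^{\downarrow}_{u'}$, and symmetrically for the upward gaps. Hence $\sum_{u}(|D^{\downarrow}_u|+|D^{\uparrow}_u|)\leq 2|V|$, and summing over all ordered pairs $(U,V)$ gives at most $2\sum_U(m-|U|)=2(d-1)m$ newly-created edges, which combined with the inherited bound yields the claimed $2(d-1)m+4|R|$. I expect the main technical subtlety to be precisely the case analysis identifying $R_2\cap V$ with $D^{\downarrow}_u\cup D^{\uparrow}_u$, in particular handling the possibility that some $v\in V$ lies strictly between $u$ and $u^+$ in $\leq_P$ even though $u,u^+$ are consecutive in the chain $U$.
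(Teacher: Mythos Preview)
Your proposal is correct and follows essentially the same decomposition as the paper: inherited edges contribute $4|R|$ via the same double-counting, and newly created edges contribute $2(d-1)m$ by showing each vertex $x\notin U$ is hit by at most two neighbourly contractions in~$U$. The paper phrases this last point via the orientation of red bars (Definition~\ref{df:redorient}: one contraction creates a bar oriented towards~$x$, one away), whereas your gap sets $D^{\downarrow}_u,D^{\uparrow}_u$ give the same conclusion directly from $\leq_P$ without invoking the orientation machinery; for part~(b) your summation over unordered pairs yields $(d-1)m$ immediately, while the paper sums over ordered pairs and obtains $2(d-1)m$, tacitly relying on each bar being counted twice.
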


\begin{proof}
a)
Consider one chain $U$ of~$P$ and the set $R_U\subseteq R$ of red
bars incident with a vertex of~$U$.
We show that the red potential of $U$ is at most $2(m-|U|)+|R_U|$.

Regarding the part of red potential contributed by new red bars
(i.e., not those inherited from $R_U$), this follows easily.
By Definition~\ref{df:redorient}, for every  $x\in X\setminus U$, 
at most one neighbourly contraction in $U$ creates a
new red bar oriented from $U$ to $x$ and at most one such oriented from~$x$;
hence the term $2(m-|U|)$ in the estimate.
In addition to the previous,
every red bar $\{u,v\}\in R_U$ where $u\in U$ is inherited by (i.e., contributes $+1$
to) at most two neighbourly contractions in $U$; namely to those of the pairs
$u^-,u$ and $u,u^+$.
However, we are double-counting this way, and we now show that it is enough 
to count the ``$+1$'' contribution towards one of the two contractions.
If, up to symmetry, $(u,v)\in R_U$, we contribute it to the contraction
of the pair $u,u^+$ since the following holds:
If $(u^-,v)\in R_U$, then the contraction of $u^-,u$ into $u^0$ anyway makes only one
inherited red bar $(u^0,v)$ and this has been contributed to $u^-$ by our rule.
Otherwise, by Lemma~\ref{lem:bars}\eqref{it:blackred}, we have
$u^-\leq_Pv$ and the inherited red bar $(u^0,v)$ is the same as the
potential new red bar counted in the first~part~for~$x=v$.

Summing previous $2(m-|U|)+|R_U|$ over all $d$ chains $U$ of cardinalities
$m_1,m_2,\ldots,m_d$ in $P$, we count every red bar of $R$ exactly twice,
and this leads to desired
$$
\sum\nolimits_{1\leq i\leq d}2(m-m_i) +2|R| = 2dm-2\sum\nolimits_{1\leq i\leq d}m_i
 +2|R| = 2(d-1)m +2|R|
\,.$$

\smallskip b)
Again, let $m_1,m_2,\ldots,m_d$ be the cardinalities of the chains of $P$.
Counting the red bars over all ordered pairs of the chains,
using Lemma~\ref{lem:bars}\eqref{it:numredbars}, we get an upper bound of
$$
\sum\nolimits_{1\leq i\not=j\leq d}(m_i+m_j)
 =\sum\nolimits_{1\leq i\leq d}(d-1)m_i + \sum\nolimits_{1\leq j\leq d}(d-1)m_j
  = 2(d-1)m
$$
red bars in~$P$.
\end{proof}

\begin{proof}[Proof of Theorem~\ref{thm:maind}]
Let $P_0=(X_0,\leq_{P_0},\pi)$ be an ordinary poset partitioned into $d$ chains.
We are now ready to finish the main proof; to find a desired contraction
sequence of $P_0$ of bounded red degree.
The natural idea at each step (with a neighbourly poset~$P$ obtained from
$P_0$ so far) is to exhaustively find a neighbourly contraction in $P$ 
of the smallest red potential, which is
upper-bounded independently of the size of $P$ based on Lemma~\ref{lem:redpot}.

Let $P=(X,\leq_P,\pi,R)$ be a contraction of $P_0$, as above, and $m=|X|\leq|X_0|=n$.
The red potential of whole $P$ is at most
$2(d-1)m+4(d-1)m=6(d-1)m$ by Lemma~\ref{lem:redpot}.
Since there are $m-d$ possible neighbourly contractions in $P$, one of them
has the red potential at most $\frac{6(d-1)m}{m-d}$.
If $m\geq7d$, we have $m-d\geq\frac67m$ and then the red potential
$\frac{6(d-1)m}{m-d}\leq\frac{7(d-1)m}{m}=7(d-1)$.
Otherwise, $m<7d$ and then the red degree in $P$ is anyway at most
$m-1\leq7d-2\leq9d-6$, and so we finish with any sequence of contractions.

We are nearly done, but there is a small catch.
For a vertex $x\in X$ of $P$, call a red edge $\{x,y\}$ incident to $x$ {\em domestic}
(to~$x$) if it has been there already the last time we have contracted into $x$ along
our contraction sequence; otherwise, call red $\{x,y\}$ {\em foreign}.
While the argument in the previous paragraph bounded the number of domestic red edges
incident to any vertex along the whole sequence, we have not yet bounded the
number of potential foreign red edges (that is those which have been created
by contraction to other vertices later~on).
Using Lemma~\ref{lem:bars}\eqref{it:oneredto}, we argue that there can be at
most one foreign edge incident to $x$ oriented towards $x$, and one oriented
from $x$, per each other chain of $P$.
Hence the number of foreign red edges incident to any $x$ is at most
$2(d-1)$, and the maximum red degree along our contraction sequence thus is
at most $7(d-1)+2(d-1)<9d-6$.

The above proof straightforwardly translates into a simple and efficient algorithm.
The red potential of one vertex of $P$ can be found in time proportional to
$d$ and the value of this red potential, and hence the minimum red potential 
of the poset $P$ in the current step of a contraction sequence is determined
in time $\ca O(dm)\leq\ca O(dn)$.
The same time is sufficient to update $P$ for the next step.
Since we need $n-1$ steps of the contraction sequence for $P_0$,
this computation is finished in time $\ca O(dn^2)$.
\end{proof}

\section{Tight estimate for posets of width 2}
\label{sec:tww2}

From Section~\ref{sec:twwd} we get that in the worst case a poset of width
$2$ has twin-width at least $1$ and at most $12$ (where the upper bound of
Theorem~\ref{thm:maind} can likely be improved a bit in this special case of $d=2$).
However, if one wants to get an exact worst-case value of twin-width,
namely value~$2$, a very different approach needs to be employed,
one which is special only for posets of width $2$ and does not generalize
even to width~$3$.

We start with the upper estimate:

\begin{theorem}\label{thm:main2}
A poset of width $2$ has the natural twin-width at most $2$,
and the corresponding contraction sequence can be computed in linear time.
\end{theorem}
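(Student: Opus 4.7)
By Dilworth's theorem, fix a 2-chain decomposition $V(P) = U \cup V$ with $U = (u_1, \ldots, u_p)$ and $V = (v_1, \ldots, v_q)$. The plan is to construct a sequence of neighbourly contractions (in the sense of Section~\ref{sec:twwd}) that keeps red degree at most~$2$ throughout; unlike the general width-$d$ argument, this requires locally informed structural choices at each step rather than a global potential bound.

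The preparatory observation specific to width~$2$ is that for every $v \in V$, the relation of $v$ to the chain $U$ is \emph{monotone} along $U$: first a block of indices $a$ with $v > u_a$, then one with $v \parallel u_a$, then one with $v < u_a$ (and symmetrically for elements of $U$ versus $V$). Combined with the non-crossing property of red bars (Lemma~\ref{lem:bars}\eqref{it:rednocross}), this tightly controls where new red bars can appear as a consequence of any single neighbourly contraction, and lets us describe the ``red footprint'' of each blob as a contiguous interval in the opposite chain.

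The algorithm would then be a greedy local rule: maintain a constant-size frontier (the lowest still-uncontracted vertex in each chain together with the red bars incident to the current contracted blobs), and at each step choose a neighbourly contraction that does not push the red degree of any blob above~$2$. The heart of the proof is to show, by a case analysis on the local configuration of black and red bars at the frontier, that such a safe contraction always exists. The analysis crucially exploits that there are only two chains---so red bars from a single blob all go to the single opposite chain---together with the non-crossing structure; for $d \ge 3$ the argument breaks down, consistent with the paper's remark that the method does not extend. I expect this case analysis to be the main obstacle, particularly in excluding ``stuck'' configurations where every available contraction would create a third red neighbour at some blob. Once it is established, the algorithm examines only constantly many local comparabilities per step, so after an $O(n)$ preprocessing pass for the chain decomposition and the frontier initialization, it runs in $O(1)$ amortized time per step, yielding the claimed linear runtime; a final contraction of the two remaining blobs completes the sequence.
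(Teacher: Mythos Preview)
Your outline correctly identifies the setup (two chains, neighbourly contractions, the monotone ``below/incomparable/above'' profile of each vertex against the other chain) and the non-crossing of red bars, but it stops precisely where the real work is: you say yourself that the case analysis excluding ``stuck'' configurations is the main obstacle and that you merely \emph{expect} a safe contraction always to exist. That is the theorem, not a proof of it. Moreover, the mechanism you propose---a constant-size frontier at the bottom of the two chains, with a greedy local choice---is not the one that works in the paper, and there is reason to doubt it works at all: in the paper's own running example the very first nontrivial contraction after preprocessing is of a pair near the \emph{top} of one chain, reached by climbing a zig-zag ``bar path'' of length three. A bottom-anchored, bounded-window greedy rule would not find that pair.

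What the paper actually does is maintain a \emph{directed bar path} $B$ rooted at a fixed minimal vertex, grown and shrunk in DFS fashion along black/red bars between the chains, and proves the much stronger invariant that \emph{all} red bars of the current red poset lie on a single ``extended bar path'' $B^+\supseteq B$. This global path invariant immediately gives red degree $\le 2$ and, crucially, survives each iteration: either $B$ is prolonged by one bar (no contraction), or a safe contraction is performed at the tip of $B$ and $B$ is shortened. The induction closes because $B^+$ is preserved in both branches; your proposed invariant (red degree $\le 2$) is too weak to be inductive on its own. If you want to salvage your approach you would need to discover an invariant of comparable strength---essentially that the red edges always form a single monotone zig-zag---and the DFS-style control structure that maintains it.
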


We prove the statement by providing the claimed algorithm and proving its correctness.
On a high level, our algorithm performs a depth-first search for a ``safe''
possibility of a neighbourly contraction in one of the two chains of a poset~$P$,
starting from a minimal vertex $u_1$ of the poset.
By a safe neighbourly contraction we mean one in which we have or create at
most one incoming and at most one outgoing red bar in the contracted vertex.
We also stay in firm control of all red bars in intermediate contracted
red posets.

To control the search for neighbourly contraction pairs and the occurrence
of red bars, we introduce the notion of a {\em directed bar path}
(recall also Definition~\ref{df:chaindia} of a chain diagram with bars).
A bar path in a red $2$-neighbourly poset $P$ is a directed path
represented as a vertex sequence $(x_1,x_2,\ldots,x_k)$ in $P$ such that
\begin{itemize}
\item $x_1$ is a minimal vertex of $P$ (either of the possible two),
\item for $i=1,\ldots,k-1$, $(x_i,x_{i+1})$ is a black or red bar in $P$
oriented this way, and
\item if both $x_i^+$,\,$x_{i+1}^+$ exist in $P$, then
	$x_i^+\not\leq_Px_{i+1}^+$.
\end{itemize}
Notice that a bar path is ``zig-zag'' switching between the two chains of~$P$.
Our bar-path controlled algorithm is then formalized in Algorithm~\ref{alg:width2}.

Commenting on this algorithm, we remark that the main part (the one
searching for a ``safe'' neighbourly contraction along bar path~$B$) 
is presented on lines \ref{it:startmain} to \ref{it:shortenB}.
The preceding supplementary part on lines \ref{it:ifBV} to \ref{it:contractV}
is there to prepare for a possible contraction at the root $u_1$ of bar path $B$;
if $u_1$ is not the global minimum of $P$, then the lower vertices
of the other chain $V$ of $P$ could create
many red bars oriented towards $u_1$ after neighbourly contraction to~$u_1$.
This is eliminated by safe contractions of the problematic vertices of $V$ to
$v_1$ on line \ref{it:contractV}, which make future neighbourly contractions to $u_1$
safe as well.

Actually, the course of Algorithm~\ref{alg:width2} is illustrated in previous
Figure~\ref{fig:exampleseq}.
It is $u_1=A$, $v_1=E$ (after the supplementary first step, $v_1=EF$),
and the bar path leading to the first contraction (of $C,D$) on line
\ref{it:contruu} is $B=(A,G,C)$.
Further on, for example, in the fourth picture (the top-right red poset)
we have $B=(A)$ and $B^+=(A,GH,CD,I)$.
In the fifth picture (bottom-left), we get $B=(AB,GH)$
and $B^+=(EF,AB,GH,CD,I)$, and so on.

\begin{algorithm}[ht]
\caption{Constructing a contraction sequence of red degree~$2$.}
\label{alg:width2}
\begin{algorithmic}[1]
\Require Given a poset $P_0=(X_0,\leq_{P_0},\pi)$ partitioned into $2$ chains.
\Ensure A record of the constructed contraction sequence of~$P_0$ of red degree~$\leq2$.
\medskip

\State  \textbf{declare} $P=(X,\leq_P,\pi,R)$~ a red $2$-neighbourly poset

\State  \textbf{declare} $B$~ a directed bar path (of black and red bars of~$P$)
\smallskip

\State $P$ $\gets$ red neighbourly poset $P=(X_0,\leq_{P_0},\pi,\emptyset)$;
\State $B$ $\gets$ bar path $(u_1)$,\, where $u_1$ is a minimal vertex of $P_0$ (any of possible two);
\State{\it// bar path $B$ directs the course of the algorithm (kind of
	similarly to classical DFS);
	note that $B$ stays rooted in $u_1=min(U)$ till the end of the main loop}
\smallskip
\smallskip
\While{$B\not=\emptyset$ }\label{it:whileB}
  \State $u$ $\gets$ the (upper) end vertex of bar path $B$, i.e., $B=(x_1,\ldots,x_k,u)$;
  \State $\{U,V\}$ $\gets$ the two chains of $P$ (by~$\pi$) such that $u\in U$;
\smallskip
  \If{$B=(u)$, which is equivalent to $u=min(U)$}
	\label{it:ifBV}
    \State{\it// here we contract the lower section of $V$ which is
	homogeneous towards $U$:}
    \State $v_1$ $\gets$ $min(V)$;
	~$u_2$ $\gets$ smallest $u_2\in U$ such that~$v_1\leq_Pu_2$, or
	$u_2$ nonexistent;
    \While{$|V|>1$ and $\big($$|U|=1$ or ($u_2$ exists and $v_1^+\leq_Pu_2$)$\big)$}
	\label{it:whileV}
      \State $P$ $\gets$ \textbf{contraction} of the pair $v_1$,\,$v_1^+$ in $P$;
      ~$v_1$ $\gets$ $v_1v_1^+$;
	\label{it:contractV}
    \EndWhile{}
  \EndIf{}
\smallskip
\State{\it// the main part; prolonging~$B$, or possibly contracting at the end 
	and shortening~$B$:}
  \State $v$ $\gets$ the smallest vertex $v\in V$ such that 
	$u\leq_Pv$ or $(u,v)\in R$, or $v$ nonexistent;
	\label{it:startmain}
\smallskip
  \If{$v$,\,$v^+$ and $u^+$ exist in $P$, and $u^+\not\leq_Pv^+$}
	\label{it:ifmain}
      \State{\it\quad~~// note that since $u^+\not\leq_Pv^+$,
	we get that $(u,v)\in R$ or $(u,v)$ is a black~bar}
      \State $B$ $\gets$ $(B,\,v)$, i.e., prolongation of bar path $B$ by the bar~$(u,v)$;
	\label{it:prolong}
  \Else
    \If{$u^+$ exists in $P$}
      \State{\it// the contraction here is safe; even if $v$,\,$v^+$ exist,
	we have $u^+\leq_Pv^+$ and 
	\hspace*{\algorithmicindent}\hspace*{\algorithmicindent}\hspace*{\algorithmicindent}
	no red bar towards~$v^+$ (or further up) is created}
      \State $P$ $\gets$ \textbf{contraction} of the pair $u$,\,$u^+$ in $P$;
	~$u$ $\gets$ $uu^+$;
	\label{it:contruu}
    \EndIf{}
    \If{$B=(u)$}
      \If{$u^+$ nonexistent in $P$}
	~$B$ $\gets$ $\emptyset$;
      \EndIf{}
    \Else
      \State $B$ $\gets$ $B\setminus(u)$, i.e., shortening of bar path $B$ by the
	last bar towards~$u$;
	\label{it:shortenB}
    \EndIf{}
  \EndIf{}

\EndWhile{}
\smallskip

\State $P$ $\gets$ \textbf{contraction} of the remaining pair $min(U)$,\,$min(V)$ in~$P$;
\end{algorithmic}
\end{algorithm}

\begin{proof}[Proof of Theorem~\ref{thm:main2}]
We refer to Algorithm~\ref{alg:width2} and the definition of bar path~$B$.
Let $P_0$ be the input ordinary poset and $P$ the current $2$-neighbourly
poset, as in the algorithm.
Let $U_0,V_0$ denote the two chains of $P_0$ and $u_1=min(U_0)$ be the start 
(root) of bar path~$B$ which stays fixed during the course of computation.
Let $v_1=min(V_0)$.
Note that we slightly abuse notation by referring to these vertices as to
$u_1$ and $v_1$ also in the poset $P$, after possible contractions into
$u_1$ or $v_1$.

For the purpose of analysis of Algorithm~\ref{alg:width2}, we define an
{\em extended bar path} $B^+\supseteq B$ of the current bar path $B$ in $P$ as follows:
$B^+$ starts with $(v_1,u_1)$ if $(v_1,u_1)\in R$, and $B^+$ starts in $u_1$ otherwise.
Then $B^+$ contains all bars of $B$ in order, then possibly one black bar
starting in the last vertex of $B$, and finally, $B^+$ ends
as a directed path using only red bars of~$P$.
We claim the following {\em invariant} at the beginning
and after every iteration of the loop from~line~\ref{it:whileB}:
\begin{enumerate}[(I)]
\item $B$ conforms to the conditions of a bar path.
\label{it:baruplus}
\item There exists an extended bar path $B^+$ of $B$ in $P$ containing all red bars of~$P$.
\label{it:ballred}
\end{enumerate}
To better understand the role of an extended bar path, observe that,
modulo renaming of contracted vertices, $B^+$ coincides with the former bar path
in the iteration in which the upper-most red bar of $B^+$ has been created
by a contraction.

\smallskip
Since an extended bar path is acyclic and does not repeat vertices
-- this is not trivial but follows from Definition~\ref{df:redorient} --
condition \eqref{it:ballred} then implies that the red degree of $P$ after
every iteration is at most~$2$, thus proving the conclusion of
Theorem~\ref{thm:main2}.
Our aim hence is to prove the invariant, by induction on the iterations of
the main cycle.

At the beginning, $B=(u_1)$ satisfies \eqref{it:baruplus} and
\eqref{it:ballred} is trivial since~$R=\emptyset$.
We now assume that these hold when an iteration of the main cycle (line \ref{it:whileB}) starts.
Possible contractions on line \ref{it:contractV} do not create red edges
or change $B$, except that when $|U|=1$ they could create the red bar
$(v_1,u_1)$ which will be included in our extended bar path there.

For the next arguments, note that line \ref{it:startmain}
always selects a red bar $(u,v)$ if there is one starting from $u$.
So, if there is no red bar from $u$ and $u^+\leq_Pv$, 
then $(u,v)$ is {\em not} a black bar either (Definition~\ref{df:chaindia})
and the extended bar path $B^+$   
from \eqref{it:ballred} before this iteration cannot reach $v$.
Consequently, the contraction of $u,\,u^+$ on line \ref{it:contruu} of this
iteration does not make $(u,v)$ red, and \eqref{it:baruplus} and
\eqref{it:ballred} are satisfied after the iteration with the same~$B^+$.
Hence we can assume that $(u,v)\in R$ or $u^+\not\leq_Pv$ hold in the
coming case analysis.

\smallskip
We now analyze the remaining cases according to the `if' statements 
from line \ref{it:ifmain} onward:
\begin{itemize}
\item 
If $v$,\,$v^+$ and $u^+$ exist and $u^+\not\leq_Pv^+$,
then $(u,v)$ is a red or black bar in $P$, and so we explicitly satisfy all
three conditions of a bar path for the prolongation $(B,v)$ on line \ref{it:prolong}.
No new red bars are created in this iteration, and we claim that an
extended bar path $B^+$ from the previous iteration contains or will contain the new bar
$(u,v)$; this is trivial if $(u,v)$ is red, and for a black bar $(u,v)$
it is the only bar of $P$ starting in $u$ anyway.
\item 
If $u^+$ and $v$ exist, and $v^+$ is nonexistent or $u^+\leq_Pv^+$,
then the contraction of $u,\,u^+$ on line \ref{it:contruu} makes $(u,v)$ red
if it was not such before.
No other red bar exists or is created from $u$ (although, the bar of $B$
towards $u$ may already be red).
Altogether, we inherit $B^+$ from the previous step,
and with (now) red $(u,v)$ this will be \eqref{it:ballred} a valid extended bar path 
of the shortened bar path $B\setminus(u)$ at the end of the iteration.
Not to forget \eqref{it:baruplus}, $B\setminus(u)$ will be a valid bar path as well.
\item 
If $B=(u)$ in the previous case, we do not shorten $B$ (since $u^+$ exists),
but the arguments stay the same.
\item 
if $u^+$ exists but $v$ does not, then an extended bar path cannot reach
beyond $u$. So, after the contraction and shortening $B$ by $u$ we will
trivially satisfy \eqref{it:baruplus} and \eqref{it:ballred} again.
\item 
Lastly, if $u^+$ is nonexistent,
then we only shorten $B$ by $u$, or we are at the end of the algorithm if
$u$ is the root of~$B$ (note that here the procedure on lines \ref{it:ifBV}
to \ref{it:contractV} also takes part and shortens $V$ to one vertex before
we stop the cycle from line \ref{it:whileB}).
\end{itemize}

It remains to argue why the algorithm stops, and what is the runtime.
The first part is clear since every iteration of the main loop either
prolongs the current bar path (which is bounded), or eventually finds a
next contraction pair.

As for the runtime, we use the following special representation of the
working poset $P$, which extends the chain diagram of Definition~\ref{df:chaindia}:
For every $u\in U$ we record, besides the red bars of $u$, the least 
$v\in V$ such that $u\leq_Pv$.
We input the poset $P_0$ as a traditional Hasse diagram, and we prepare our
representation of it in linear time with respect to~$|X_0|$.
The total number of iterations of the main cycle is linearly proportional to the
number of performed contractions in the main part.
Then, at every iteration of the main cycle, we can perform the computation,
and the update of the structure representing $P$, in constant time.
The only exception is a possible iteration of the inner cycle on line
\ref{it:whileV}, which is counted amortized towards the total number of
contractions.
\end{proof}

The lower estimate matching Theorem~\ref{thm:main2} is as follows.

\begin{figure}[tbh]
$$
\begin{tikzpicture}[scale=0.66]
\tikzstyle{every node}=[draw, shape=circle, minimum size=3pt,inner sep=0pt,
		color=blue, fill=blue]
\node[label=left:$a_1$] at (0,0) (a1) {};
\node[label=left:$a_2$] at (0,1) (a2) {};
\node[label=left:$a_3$] at (0,2) (a3) {};
\node[label=left:$a_4$] at (0,3) (a4) {};
\node[label=left:$a_5$] at (0,4) (a5) {};
\node[label=right:$b_1$] at (3,0) (b1) {};
\node[label=right:$b_2$] at (3,1) (b2) {};
\node[label=right:$b_3$] at (3,2) (b3) {};
\node[label=right:$b_4$] at (3,3) (b4) {};
\node[label=right:$b_5$] at (3,4) (b5) {};
\draw (a1) -- (a5) (b1) -- (b5);
\draw (a1) -- (b3) (b1) -- (a3);
\draw (a3) -- (b5) (b3) -- (a5);
\end{tikzpicture}
\medskip$$ $$
\begin{tikzpicture}[scale=0.56]
\tikzstyle{every node}=[draw, shape=circle, minimum size=3pt,inner sep=0pt,
		color=blue, fill=blue]
\node[label=left:$a_1$] at (0,0) (a1) {};
\node[label=left:$a_2a_3$] at (0,2) (a3) {};
\node[label=left:$a_4$] at (0,3) (a4) {};
\node[label=left:$a_5$] at (0,4) (a5) {};
\node[label=right:$b_1$] at (3,0) (b1) {};
\node[label=right:$b_2$] at (3,1) (b2) {};
\node[label=right:$b_3$~~~] at (3,2) (b3) {};
\node[label=right:$b_4$] at (3,3) (b4) {};
\node[label=right:$b_5$] at (3,4) (b5) {};
\draw (a1) -- (a5) (b1) -- (b5);
\draw (a1) -- (b3); \draw[color=red,thick] (b1) -- (a3);
\draw (a3) -- (b5) (b3) -- (a5) (b1) -- (a4);
\end{tikzpicture}
\qquad
\begin{tikzpicture}[scale=0.56]
\tikzstyle{every node}=[draw, shape=circle, minimum size=3pt,inner sep=0pt,
		color=blue, fill=blue]
\node[label=left:$a_1$] at (0,0) (a1) {};
\node[label=left:$a_2a_3$] at (0,2) (a3) {};
\node[label=left:$a_4$] at (0,3) (a4) {};
\node[label=left:$a_5$] at (0,4) (a5) {};
\node[label=right:$b_1$] at (3,0) (b1) {};
\node[label=right:$b_2b_3$] at (3,2) (b3) {};
\node[label=right:$b_4$] at (3,3) (b4) {};
\node[label=right:$b_5$] at (3,4) (b5) {};
\draw (a1) -- (a5) (b1) -- (b5);
\draw[color=red,thick] (a1) -- (b3) (b1) -- (a3);
\draw (a3) -- (b5) (b3) -- (a5) (b1) -- (a4) (a1) -- (b4);
\end{tikzpicture}
\qquad
\begin{tikzpicture}[scale=0.56]
\tikzstyle{every node}=[draw, shape=circle, minimum size=3pt,inner sep=0pt,
		color=blue, fill=blue]
\node[label=left:$a_1$] at (0,0) (a1) {};
\node[label=left:$a_2a_3$] at (0,2) (a3) {};
\node[label=left:$a_4$] at (0,3) (a4) {};
\node[label=left:$a_5$] at (0,4) (a5) {};
\node[label=right:$b_1$] at (3,0) (b1) {};
\node[label=right:$b_2$] at (3,1) (b2) {};
\node[label=right:$b_3b_4$] at (3,2) (b3) {};
\node[label=right:$b_5$] at (3,4) (b5) {};
\draw (a1) -- (a5) (b1) -- (b5);
\draw (a1) -- (b3); \draw[color=red,thick] (b1) -- (a3);
\draw (a3) -- (b5) (b1) -- (a4) (b2) -- (a5); 
\draw[color=red,thick] (b3) -- (a5);
\end{tikzpicture}
$$
\caption{[top] A poset of width $2$ which has natural twin-width (at least)~$2$.
[bottom] All possible contractions of this poset, up to symmetry, which
have red degree $1$.}
\label{fig:exampletww2}
\end{figure}
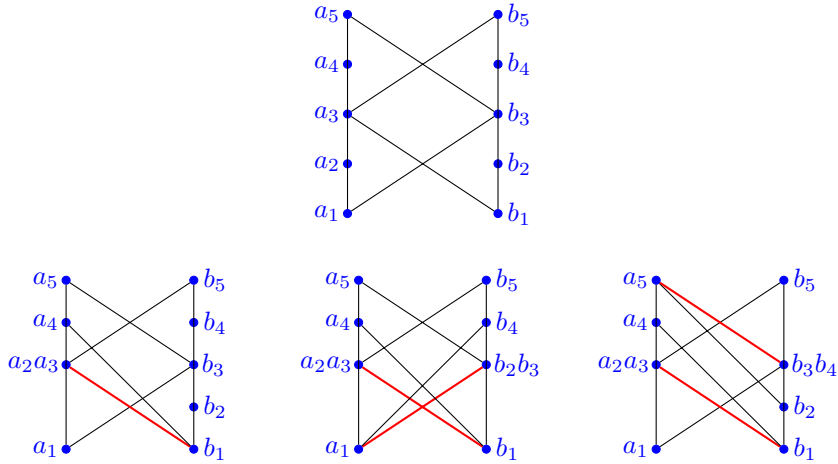

\begin{proposition}
The poset depicted in Figure~\ref{fig:exampletww2}
has the natural twin-width at least~$2$.
\end{proposition}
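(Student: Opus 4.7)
The plan is to prove by exhaustive case analysis that every contraction sequence for the poset $P$ of Figure~\ref{fig:exampletww2} must, at some step, produce a vertex of red degree at least $2$. We exploit the symmetries of $P$: the chain-swap $\sigma_1\colon a_i \leftrightarrow b_i$ and the order-reversal $\sigma_2\colon a_i \leftrightarrow a_{6-i}$, $b_i \leftrightarrow b_{6-i}$ (both easily checked to be automorphisms) generate a Klein four-group, which substantially reduces the number of cases to examine.

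First, I would enumerate every possible first contraction and apply Definition~\ref{df:nattww} to compute the red edges it produces. A short check shows that every cross-chain contraction of two incomparable vertices $a_i, b_j$ makes the merged vertex lose comparability with several vertices that were comparable to only one of its two predecessors, creating red degree at least $2$ immediately. Every same-chain \emph{outer} contraction, e.g.\ $(a_1, a_2)$, likewise yields two red edges at once; for instance, $a_1$ is comparable to both $b_3$ and $b_4$ while $a_2$ is not, producing red edges $\{a_1a_2, b_3\}$ and $\{a_1a_2, b_4\}$. The remaining candidates are the four \emph{middle} same-chain contractions $(a_2,a_3)$, $(a_3,a_4)$, $(b_2,b_3)$, $(b_3,b_4)$; these form a single orbit under the symmetry group and each yields red degree exactly~$1$. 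By symmetry we may assume the first contraction is $(a_2,a_3)$, producing the single red edge $\{a_2a_3, b_1\}$ (the first bottom configuration in Figure~\ref{fig:exampletww2}).

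From this state I would analogously enumerate every possible second contraction. Any contraction involving the vertex $a_2a_3$ automatically inherits its red edge to $b_1$, and a direct check shows that each such contraction additionally creates at least one further red edge at the merged vertex (for example, $(a_2a_3, a_4)$ creates an extra red edge to $b_5$). Cross-chain contractions fail by the same comparability argument as before. Outer same-chain contractions in chain $V$, namely $(b_1, b_2)$ and $(b_4, b_5)$, both introduce two red edges at the merged vertex, and $(b_1, b_2)$ moreover inherits the existing red edge. After eliminating these, the only second contractions that preserve red degree $1$ are $(b_2, b_3)$ and $(b_3, b_4)$, each resulting in a red poset with exactly two red edges and red degree~$1$ at every vertex (the remaining two bottom configurations in Figure~\ref{fig:exampletww2}).

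The final step is to verify that in each of these two surviving two-contraction states, every possible third contraction creates a vertex of red degree at least~$2$. The key observation is that any same-chain contraction in chain $U$ or $V$ merges a vertex that is already incident to one of the two red edges and therefore inherits that edge; simultaneously, the asymmetry between the merged vertex's two preimages causes a new red edge to appear, giving red degree~$2$. The remaining cross-chain contractions either inherit both existing red edges or create new ones for the same reasons as in Step~1. The main obstacle is tediousness rather than difficulty: one must compute, in each of a still manageable number of cases, the red edges produced by Definition~\ref{df:nattww}. Once this analysis is complete, we conclude that no contraction sequence of $P$ maintains red degree at most~$1$, so the natural twin-width of $P$ is at least~$2$.
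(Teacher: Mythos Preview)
Your approach is essentially identical to the paper's: exploit the Klein-four symmetry to reduce cases, show that only the ``middle'' same-chain contractions survive the first step, and then carry the exhaustive analysis two more levels deep until every branch hits red degree~$2$. Both your write-up and the paper's leave the final case check as a routine verification; one small omission in your Step~2 is the same-chain contraction $(a_4,a_5)$ in chain $U$ (it is not covered by ``involving $a_2a_3$'' nor by your chain-$V$ clause), but it immediately yields two red edges to $b_2,b_3$ and so does not affect the argument.
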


\begin{proof}
Thanks to symmetries in the depicted poset $P$, it is routine and easy to verify that
every contraction of a pair in $P$ results in two red edges incident
to the contracted vertex, except the contraction of the pair $a_2,\,a_3$
(or the symmetric pairs $b_2,b_3$ or $b_3,b_4$ or $a_3,a_4$).
The result of this contraction, a red poset $P_1$, is on bottom left of Figure~\ref{fig:exampletww2}.

Now, in $P_1$, the contraction of (now red) pair $a_2a_3,\,b_1$ creates
three incident red edges.
For every other contracted pair in $P_1$, we either get the same two red
edges as if it was contracted in $P$, or another red edge incident to
$a_2a_3$ or to $b_1$, or one of the further two possibilities of isolated
red edges depicted at the bottom of Figure~\ref{fig:exampletww2}.
In those cases, again by a boring but routinely easy case analysis,%
\footnote{We have independently verified this conclusion also by an
exhaustive computer check.}
one can see that every contraction creates red degree at least two.
\end{proof}

\section{Conclusions}
We have proved an asymptotically tight relation between the width of a poset
and its twin-width.
The constants in this linear relation are very reasonable, but they can likely
still be improved.
However, the most interesting question for future research is whether the
relation could possible be reversed. This of course cannot be done
directly since there are trivial examples of posets of large width and small
twin-width, but it is an intriguing question of whether classes of bounded
twin-width can always be ``encoded'' (formally, by means of an FO
transduction) in posets of bounded width.
Though, to answer this question one would likely need much stronger tools.

\bibliography{tww}

\end{document}